\newtheorem{theorem}{Theorem}
\newtheorem{lemma}[theorem]{Lemma}
\newtheorem{corollary}[theorem]{Corollary}
\newtheorem{remark}[theorem]{Remark}
\newcommand{\R}{\mathbb{R}}
\newcommand{\C}{\mathbb{C}}
\newcommand{\N}{\mathbb{N}}
\newcommand{\bP}{\mathbb{P}}
\newcommand{\bS}{\mathbb{S}}
\newcommand{\bE}{\mathbb{E}}
\newcommand{\ii}{\infty}
\newcommand\1{{\ensuremath {\mathds 1} }}
\renewcommand\phi{\varphi}
\newcommand{\cT}{\mathcal{T}}
\newcommand{\cB}{\mathcal{B}}
\newcommand{\cK}{\mathcal{K}}
\newcommand{\cE}{\mathcal{E}}
\newcommand{\cF}{\mathcal{F}}
\newcommand{\cN}{\mathcal{N}}
\renewcommand{\geq}{\geqslant}
\renewcommand{\leq}{\leqslant}
\renewcommand{\tilde}{\widetilde}
\newcommand{\eps}{\varepsilon}
\newcommand{\nn}{\nonumber}
\newcommand{\rd}{\mathrm{d}}
\newcommand{\dx}{\rd x}
\newcommand{\dy}{\rd y}
\title[Improved Lieb-Oxford bound]{Improved Lieb-Oxford bound on the indirect and exchange energies}
\author[M. Lewin]{Mathieu Lewin}
\address{CNRS \& CEREMADE, Universit\'e Paris-Dauphine, PSL University, 75016 Paris, France}
\email{mathieu.lewin@math.cnrs.fr}
\author[E.H. Lieb]{Elliott H. Lieb}
\address{Departments of Mathematics and Physics, Jadwin Hall, Princeton University, Washington Rd., Princeton, NJ 08544, USA}
\email{lieb@princeton.edu}
\author[R. Seiringer]{Robert Seiringer}
\address{IST Austria (Institute of Science and Technology Austria), Am Campus 1, 3400 Klosterneuburg, Austria}
\email{robert.seiringer@ist.ac.at}
\date{\today. This article belongs to the themed collection: \emph{Mathematical Physics and Numerical Simulation of Many-Particle Systems}; V. Bach and L. Delle Site (eds.).}
\begin{document}

\begin{abstract}
The Lieb-Oxford inequality provides a lower bound on the Coulomb energy of a classical system of $N$ identical charges only in terms of their one-particle density. We prove here a new estimate on the best constant in this inequality. Numerical evaluation provides the value 1.58, which is a significant improvement to the previously known value 1.64. The best constant has recently been shown to be larger than 1.44. In a second part, we prove that the constant can be reduced to 1.25 when the inequality is restricted to Hartree-Fock states. This is the first proof that the exchange term is always much lower than the full indirect Coulomb energy.

\bigskip

\noindent \sl \copyright~2022 by the authors. This paper may be reproduced, in its entirety, for non-commercial purposes.
\end{abstract}

\maketitle

\tableofcontents

\section{Introduction and main result}

Density Functional Theory (DFT)~\cite{DreGro-90,ParYan-94,EngDre-11,BurWag-13,LewLieSei-19_ppt} has become the method of choice to simulate large molecules and solids. The hundreds of density functionals currently available can essentially be classified in two categories, the \emph{semi-empirical} and \emph{non-empirical} functionals~\cite{MarHea-17}. The first class contains functionals whose parameters are fitted to accurate reference values~\cite{Becke-93,Becke-97}. On the other hand, the form of the non-empirical functionals is chosen based on physical considerations, but the unknown parameters are all fixed to ensure that the functional satisfies a list of exactly known properties. This strategy has been used for instance for the famous PBE~\cite{PerBurErn-96} and SCAN~\cite{SunRuzPer-15} functionals, which satisfy 11 and 17 exact constraints, respectively.

The Lieb-Oxford bound~\cite{LieOxf-80} provides an estimate on the lowest possible Coulomb energy of $N$ identical particles in terms of their charge density~$\rho$. This bound was used as a constraint for some density functionals, including PBE and SCAN~\cite{Perdew-91,LevPer-93,PerBurErn-96,TaoPerStaScu-03,SunPerRuz-15,SunRuzPer-15,Perdew_etal-16,PerSun-22}. For applications to DFT it is therefore important to improve the numerical value of the constant in the Lieb-Oxford inequality. The best value known at the moment is 1.64. It was derived by Chan and Handy~\cite{ChaHan-99}, based on a numerical optimization of one part of the original Lieb-Oxford argument~\cite{LieOxf-80}, which itself had provided the constant 1.68. It has recently been shown~\cite{CotPet-19b,LewLieSei-19b} that the best constant has to be larger than 1.44. In this paper, we prove a new estimate on the best constant. Numerical evaluation of the resulting expression then provides the approximate value 1.58.

The \emph{exchange energy}, which is obtained when restricting the problem to Hartree-Fock states, is also very useful for DFT. It was argued in~\cite{PerRuzSunBur-14,PerSun-22} that there should exist a Lieb-Oxford bound with a much lower constant for such states. In this article we shall prove that, indeed, the bound holds for the exchange energy with the constant 1.25, which is strictly less than the best lower bound 1.44 known in the general case.

\subsubsection*{\textbf{The indirect Coulomb energy}}
Before stating our main result, we start by defining the indirect energy and reviewing some of its properties. We consider a symmetric probability distribution $\bP$ over $(\R^3)^N$. The latter represents a system of $N$ classical or quantum identical particles. For electrons we have
\begin{equation}
 \bP(x_1,\dots,x_N)=\sum_{\sigma_1,\dots,\sigma_N\in\{\uparrow,\downarrow\}}|\Psi(x_1,\sigma_1,\dots,x_N,\sigma_N)|^2
 \label{eq:bP_Psi}
\end{equation}
where $\Psi$ is the corresponding quantum wave function and $\sigma_1,\dots,\sigma_N$ are the spin variables. A similar formula holds for density matrices. The associated one-particle charge density is defined by
$$\rho_\bP(x)=N\int_{(\R^3)^{N-1}}\rd\bP(x,x_2,\dots,x_N).$$
The \emph{indirect energy of $\bP$} is by definition the difference between the $\bP$--expectation of the $N$-body Coulomb interaction and the classical Coulomb energy of the charge density $\rho_\bP$ (also called the \emph{direct} or \emph{Hartree} or \emph{Coulomb term}):
\begin{multline}
\cE_\text{ind}[\bP]:=\int_{(\R^3)^N}\bigg(\sum_{1\leq j<k\leq N}\frac1{|x_j-x_k|}\bigg)\rd\bP(x_1,\dots,x_N)\\-\frac12\iint_{\R^3\times\R^3}\frac{\rho_\bP(x)\rho_\bP(y)}{|x-y|}\rd x\,\rd y.
\label{eq:cE_ind_bP}
\end{multline}
The \emph{indirect energy of a density $\rho$} is the smallest indirect energy one can reach with $N$-particle probabilities $\bP$ of density $\rho_\bP=\rho$:
\begin{equation}
E_\text{ind}[\rho]:=\inf\big\{\cE_\text{ind}[\bP]\ :\ \rho_\bP=\rho\big\}.
\label{eq:E_ind}
\end{equation}
This is well defined under the sole assumption that $\rho$ is a non-negative measure with $\int_{\R^3}\rho=N$, such that the direct term is finite. For instance the uniform measure of a sphere is allowed, but not a Dirac delta at a point.  In fact, we have the estimate
\begin{equation}
-\frac12\iint_{\R^3\times\R^3}\frac{\rho(x)\rho(y)}{|x-y|}\rd x\,\rd y\leq E_\text{ind}[\rho]\leq -\frac1{2N}\iint_{\R^3\times\R^3}\frac{\rho(x)\rho(y)}{|x-y|}\rd x\,\rd y\leq0.
\label{eq:bad_bound}
\end{equation}
The lower bound follows from the positivity of the Coulomb interaction and the upper bound is obtained after inserting the trial state $\bP(x_1,\dots,x_N)=N^{-N}\prod_{j=1}^N\rho(x_j)$ corresponding to i.i.d.~particles.

Under the same assumptions as above on $\rho$, the infimum in~\eqref{eq:E_ind} is known to be attained, that is, there exists an optimal $\bP$. This minimizer is typically not absolutely continuous with respect to the Lebesgue measure, and concentrates on a manifold of small dimension. This is an expression of the fact that the particles are believed to be ``strictly correlated''~\cite{Seidl-99,SeiPerLev-99,SeiGorSav-07,GorSeiVig-09,RasSeiGor-11,SeiBenKooGor-22,FriGerGor-22_ppt}, that is, the locations of the $N$ particles are completely determined by those of only a few of them, like in a crystal. It has also been proved using methods of multi-marginal optimal transportation~\cite{Kellerer-84,ButPasGor-12,Pascale-15,ButChaPas-18,MarGerNen-17} that there exists a \emph{dual potential} $V$. The latter is such that $\bP$ concentrates on the set of ground states for the classical $N$-particle Hamiltonian with external potential $V$. This is the classical equivalent of the Kohn-Sham potential in DFT.

In Density Functional Theory, it is often useful to rely on \emph{local} functionals. Due to the homogeneity of the Coulomb potential, the most natural local approximation to $E_\text{ind}[\rho]$ is a multiple of $\int_{\R^3}\rho(x)^{4/3}\,\rd x$. In fact, for densities which are very flat (slowly varying), the \emph{Local Density Approximation} predicts that $E_\text{ind}[\rho]$ converges to such a local term. This was recently proved in~\cite{LewLieSei-18}, where it is shown that
\begin{equation}
E_{\rm ind}[\rho(\cdot/N^{\frac13})]\underset{N\to\ii}\sim e_{\rm UEG}\int_{\R^3}\rho(x/N^{\frac13})^{\frac43}\,\rd x=N\,e_{\rm UEG}\int_{\R^3}\rho(x)^{\frac43}\,\rd x
\label{eq:LDA}
\end{equation}
for any fixed smooth density $\rho$ of integer mass. Here $e_{\rm UEG}$ is the energy per unit volume of the \emph{Uniform Electron Gas (UEG)} at unit density. Wigner predicted that this system is crystallized on the BCC lattice~\cite{Wigner-34}, which leads to the conjecture that
$$e_{\rm UEG}\overset{?}{=}\zeta_{\rm BCC}(1)\simeq -1.4442$$
where $\zeta$ is the Epstein Zeta function of this lattice~\cite{BlaLew-15,Lewin-22}. The constraint that the UEG must have a uniform density is not so easy to handle and the upper bound
\begin{equation}
e_{\rm UEG}\leq\zeta_{\rm BCC}(1)
\label{eq:UEG_upper}
\end{equation}
was surprisingly hard to prove. This was achieved only recently in~\cite{CotPet-19b,LewLieSei-19b}. A simple floating Wigner crystal does not yield the right answer~\cite{LewLie-15} and it is necessary to make it float in a layer of perfect fluid to damp the large boundary charge fluctuations~\cite{LewLieSei-19b}. A universal bound on the difference
$E_{\rm ind}[\rho]- e_{\rm UEG}\int_{\R^3}\rho(x)^{4/3}\,\rd x$ involving gradients corrections was recently derived in~\cite[App.~A]{LewLieSei-19} but it requires using the grand canonical version of $E_{\rm ind}[\rho]$~\cite{LewLieSei-19_ppt} which will not be discussed here.

\subsubsection*{\textbf{The Lieb-Oxford (LO) bound}}
It is natural to ask whether there exists a universal bound (valid for all possible densities) on $E_\text{ind}[\rho]$, involving only its local approximation $\int_{\R^3}\rho(x)^{4/3}\,\rd x$. The first proof of the existence of such a lower bound is due to Lieb and Oxford in~\cite{Lieb-79,LieOxf-80}. The version we obtain in this paper is partly based on numerical optimization and can be stated as
\begin{equation}
\boxed{E_\text{ind}[\rho]\geq-1.58\int_{\R^3}\rho(x)^{\frac43}\,\rd x.}
\label{eq:LO}
\end{equation}
The same inequality was shown first by Lieb~\cite{Lieb-79} with the constant 8.52, and then improved by Lieb-Oxford~\cite{LieOxf-80} to the much better 1.68. By numerically optimizing one part of the argument of Lieb-Oxford, Chan and Handy~\cite{ChaHan-99} managed to push down the constant to 1.64. This was the best known value prior to our work. Due to~\eqref{eq:LDA} and~\eqref{eq:UEG_upper}, it is known that the best constant is above 1.44. In fact, it was conjectured in~\cite{LevPer-93,OdaCap-07,RasPitCapPro-09} that the best Lieb-Oxford constant might be $-e_{\rm UEG}$. Hence~\eqref{eq:LO} is a rather significant decrease of 30\;\% of the gap to the conjectured value.

Let us mention that the constant in~\eqref{eq:LO} can be replaced by 1.451 when $\rho$ is a characteristic function, or at the expense of adding gradient corrections~\cite{LieNar-75,BenBleLos-12,LewLie-15}. Let us also recall that there cannot exist a negative \emph{upper} bound on $E_{\rm ind}[\rho]$ only in terms of $\int_{\R^3}\rho^{4/3}(x)\,\rd x$, for general densities. In fact, such a bound already fails at $N=1$ since the direct term cannot be estimated from below by $\int_{\R^3}\rho^{4/3}(x)\,\rd x$ for all $\rho$.

In order to show~\eqref{eq:LO} we first revisit in Section~\ref{sec:theory} the Lieb-Oxford proof and provide a new estimate on the best constant $c_{\rm LO}$. It takes the general form
\begin{equation}
c_{\rm LO}\leq \inf_{\mu,\nu}\cK(\mu,\nu),
\label{eq:general_form_proof}
\end{equation}
where the infimum is over all radial probability measures $\mu,\nu$ in $\R^3$ and $\cK$ is a highly nonlinear and nonlocal functional of $\mu$ and $\nu$ (involving another minimization problem). See Theorem~\ref{thm:main} below for the details. The measure $\mu$ is the one used to smear out the point charges, whereas the measure $\nu$ is used to estimate errors in the Onsager argument~\cite{Onsager-39,LieOxf-80,LieSei-09}.

In Section~\ref{sec:numerics}, we explain how we solved the minimization problem~\eqref{eq:general_form_proof} numerically, which gave a value slightly below $1.58$. This requires finding good trial states $\mu,\nu$ and to be able to compute $\cK(\mu,\nu)$ sufficiently accurately. Minimizers of the variational problem in~\eqref{eq:general_form_proof} may not be unique. Nothing guarantees that we have found a global minimizer but that does not affect the rigor of our $1.58$ bound. Other numerical techniques could possibly provide better trial states. We hope that our work will stimulate further activities on the Lieb-Oxford bound.

\subsubsection*{\textbf{The exchange energy}}
Our approach can also be used to prove a new estimate on the exchange energy.
Here we define the exchange energy the same as for $E_\text{ind}[\rho]$ in~\eqref{eq:E_ind} except that we require that $\bP$ arises from a Slater determinant $\Psi$, through the same formula as in~\eqref{eq:bP_Psi}. Such a probability measure $\bP$ is also called a determinantal point process~\cite{Soshnikov-00}. Written in terms of the one-particle density matrix $\gamma$ of the Slater determinant, this can be equivalently expressed as
\begin{equation}
E_\text{exc}[\rho]:=\inf_{\substack{\gamma=\gamma^2=\gamma^*\\ \rho_\gamma=\rho}}\left\{-\frac12\iint_{\R^3\times\R^3}\frac{|\gamma(x,y)|^2}{|x-y|}\dx\,\dy\right\}.
\label{eq:E_ex}
\end{equation}
We use the same notation $\gamma$ for the operator acting on $L^2(\R^3,\C^2)$ and for its integral kernel. The kernel $\gamma(x,y)$ is a $2\times2$ hermitian matrix and $|\gamma(x,y)|^2=\sum_{\sigma,\sigma'}|\gamma(x,y)_{\sigma,\sigma'}|^2$ denotes the Frobenius norm squared of this matrix. The total density is, by definition, given by
$$\rho_\gamma(x)=\sum_{\sigma\in\{\uparrow,\downarrow\}}\gamma(x,x)_{\sigma,\sigma}.$$
Any density $\rho\in L^1(\R^3)$ with $\int_{\R^3}\rho\in\N$ is representable by a Slater determinant~\cite[Thm.~1.2]{Lieb-83b}.

Since we have restricted the minimum to Slater determinants in~\eqref{eq:E_ex}, it is clear that $E_\text{exc}[\rho]\geq E_\text{ind}[\rho]$ and therefore the 1.58 bound~\eqref{eq:LO} holds as well. The authors of~\cite{PerRuzSunBur-14,PerSun-22} have raised the question of whether this is optimal, or if a better bound could be derived. In Theorem~\ref{thm:negative_correlations} we shall prove that
\begin{equation}
\boxed{E_\text{exc}[\rho]\geq -1.25\int_{\R^3}\rho(x)^{\frac43}\,\dx.}
\label{eq:LO_exchange}
\end{equation}
To our knowledge, this is the first bound for exchange that has a constant strictly lower than the best constant $c_{\rm LO}\geq 1.44$.

In fact we shall obtain the constant 1.25 for a more general class of states, not just Slater determinant. We will be able to handle all the classical probabilities $\bP$ which have a negative truncated two-particle correlation function, that is, satisfy the pointwise inequality
\begin{equation}
 \rho_\bP^{(2)}(x,y)- \rho_\bP(x)\rho_\bP(y)\leq0,\qquad\forall x,y\in\R^3,
 \label{eq:negative_correlations_intro}
\end{equation}
where
\begin{equation}
\rho^{(2)}_\bP(x,y):=N(N-1)\int_{(\R^3)^{N-2}}\rd\bP(x,y,x_3,...,x_N)
\label{eq:rho_2}
\end{equation}
denotes the two-point correlation function. This includes gas-like phases (at high temperature) and proves that the best Lieb-Oxford constant can never be attained with such gaseous phases. At constant density we will even lower the constant to $1.21$. These results are all described in Section~\ref{sec:exchange}.

From the exchange energy of the infinite non-interacting electron gas (as computed by Dirac in~\cite{Dirac-28b}), we know that the best constant in~\eqref{eq:LO_exchange} is at least $(3/4)\left(6/\pi\right)^{1/3}\simeq0.9305$. This is not optimal, however. One obtains a better lower bound using the one-particle problem $N=1$
\begin{equation}
\sup_{\substack{\rho\geq0\\ \int_{\R^3}\rho=1}}\frac{\iint_{\R^3\times\R^3}\frac{\rho(x)\rho(y)}{|x-y|}\dx\,\dy}{2\int_{\R^3}\rho(x)^{4/3}\,\dx}\simeq 1.0918.
\label{eq:N=1}
\end{equation}
The numerical value is obtained by solving the Lane-Emden equation~\cite{LieOxf-80}. It was conjectured in~\cite{PerRuzSunBur-14,PerSun-22} that the constant  in~\eqref{eq:N=1} might be optimal for the exchange inequality~\eqref{eq:LO_exchange}.

We emphasize that in~\eqref{eq:E_ex} we did not impose any spin symmetry. In fact, since our bound~\eqref{eq:LO_exchange} solely relies on the negative correlations~\eqref{eq:negative_correlations_intro} (hence is purely classical), it is completely \emph{independent on the number of spin states} (equal to two for electrons). In density functional theory, the exchange energy~\eqref{eq:E_ex} is sometimes rather defined by restricting the minimum to paramagnetic states, which take the special form $\gamma(x,y)_{\sigma\sigma'}=\tau(x,y)\delta_{\sigma\sigma'}$. Then $\rho_\gamma=2\rho_\tau$ and $|\gamma(x,y)|^2=2\tau(x,y)^2$. For such states, the Lieb-Oxford constant is therefore multiplied by $2^{-1/3}$ and we get $0.99$ in place of $1.25$.

\subsubsection*{\textbf{Acknowledgments.}} We would like to thank David Gontier for useful advice on the numerical simulations. This project has received funding from the European Research Council (ERC) under the European Union's Horizon 2020 research and innovation programme (grant agreements MDFT No 725528 of M.L. and AQUAMS No 694227 of R.S.). We are thankful for the hospitality of the Institut Henri Poincaré in Paris, where part of this work was done.

\subsubsection*{\textbf{Conflict of interest statement.}} The authors declare that they have no conflict of interest.

\subsubsection*{\textbf{Data availability statement.}} The manuscript has no associated data.

\section{A new estimate on the Lieb-Oxford constant}\label{sec:theory}
In this section we describe our new Lieb-Oxford bound. We denote by
$$D(\nu_1,\nu_2):=\frac12\iint_{\R^3\times\R^3}\frac{\rd \nu_1(x)\,\rd\nu_2(y)}{|x-y|}$$
the Coulomb scalar product of two finite measures $\nu_1,\nu_2$. We recall that Newton's theorem implies that for any radial probability measures $\nu_1$ and $\nu_2$, centered respectively at $R_1$ and $R_2$, we have
\begin{equation}
 2D(\nu_1,\nu_2)\leq 2D(\nu_1,\delta_{R_2})\leq 2D(\delta_{R_1},\delta_{R_2})=\frac{1}{|R_1-R_2|}.
 \label{eq:Newton}
\end{equation}
There is equality if $\nu_1$ and $\nu_2$ have disjoint supports.

Let $\rho\geq0$ be any non-negative function in $L^1(\R^3)\cap L^{4/3}(\R^3)$ with $\int_{\R^3}\rho=N\in\N$. The starting point is the exact same as in~\cite{LieOxf-80} and many other works on the subject~\cite{LieNar-75,LieSei-09,BenBleLos-12,LewLie-15}.
We consider a \emph{radial} probability measure $\mu$ (about the origin), such that $D(\mu,\mu)<\ii$. We then define as in~\cite{LieOxf-80}
\begin{equation}
 \mu_x(y):=\rho(x)\,\mu\left(\rho(x)^{\frac13}(y-x)\right).
 \label{eq:def_mu_x}
\end{equation}
In other words we recenter $\mu$ at the point $x$ and rescale it according to the local value of the given density $\rho$. By Newton's theorem~\eqref{eq:Newton}, we have
\begin{align*}
\sum_{1\leq j<k\leq N}\frac1{|x_j-x_k|}&\geq 2\sum_{1\leq j<k\leq N}D(\mu_{x_j},\mu_{x_k})\\
&=D\left(\sum_{j=1}^N\mu_{x_j},\sum_{j=1}^N\mu_{x_j}\right) -\sum_{j=1}^ND(\mu_{x_j},\mu_{x_j})\\
&=D\left(\sum_{j=1}^N\mu_{x_j},\sum_{j=1}^N\mu_{x_j}\right) -D(\mu,\mu)\sum_{j=1}^N\rho(x_j)^{\frac13}
\end{align*}
for all $x_1,...,x_N\in\R^3$. The last line is because $D(\mu_{x},\mu_{x})=\rho(x)^{\frac13} D(\mu,\mu)$ by scaling.
Following Onsager~\cite{Onsager-39}, we now use that for any measures $f,\eta$
\begin{align*}
D(f,f)&=D(f-\eta,f-\eta) -D(\eta,\eta)+2D(\eta,f)\\
&\geq -D(\eta,\eta)+2D(\eta,f).
\end{align*}
This is because the Coulomb potential has a positive Fourier transform, hence $D(f-\eta,f-\eta)\geq0$. Applying this to $f=\sum_{j=1}^N\mu_{x_j}$ we obtain
\begin{equation*}
\sum_{1\leq j<k\leq N}\frac1{|x_j-x_k|}\geq  -D(\eta,\eta)+2\sum_{j=1}^ND(\mu_{x_j},\eta)-D(\mu,\mu)\sum_{j=1}^N\rho(x_j)^{\frac13}.
\end{equation*}
The important point here is that we have estimated a two-body term by a one-body term, pointwise. Integrating against a probability $\bP$ of density $\rho$ and minimizing over $\bP$, we find the lower bound
\begin{equation}
E_{\rm ind}[\rho]\geq -D(\rho,\rho)  -D(\eta,\eta)+2\int_{\R^3}\rho(x)D(\mu_{x},\eta)\rd x-D(\mu,\mu)\int_{\R^3}\rho(x)^{\frac43}\,\rd x,
\label{eq:lower_eta}
\end{equation}
valid for all $\eta$. The last term is already of the desired form.
Lieb and Oxford chose $\eta=\rho$ and then the first three terms can be re-expressed as
\begin{equation}
 2\int_{\R^3}\rho(x)\,D\big(\mu_x-\delta_x,\rho)\,\rd x.
 \label{eq:error_LO}
\end{equation}
Estimating this term only in terms of $\int_{\R^3}\rho(x)^{4/3}\rd x$ is not easy and was achieved first in~\cite{Lieb-79,LieOxf-80}.

We take a different route and rather take $\eta$ of the form
$$\eta(y)=\int_{\R^3}\rho(x)\nu_x(y)\,\rd x$$
where $\nu$ is another radial probability measure and $\nu_x$ is defined exactly as for $\mu_x$ in~\eqref{eq:def_mu_x}. This yields the lower bound
\begin{multline}
E_{\rm ind}[\rho]\geq -\iint_{\R^3\times\R^3}\rho(x)\rho(y)\big(D(\nu_{x},\nu_y)+D(\delta_x,\delta_y)-2D(\mu_x,\nu_y)\big)\rd x\\
-D(\mu,\mu)\int_{\R^3}\rho(x)^{\frac43}\,\rd x.
\label{eq:new_lower}
\end{multline}
The Lieb-Oxford choice $\eta=\rho$ corresponds to $\nu=\delta_0$, whereas the best lower bound is obtained for $\nu=\mu$. Later we will make other approximations which will not preserve the order and, for this reason, we keep $\nu$ arbitrary. Our goal is to bound the first term on the right side of~\eqref{eq:new_lower}. It is more complicated to manipulate than the error term~\eqref{eq:error_LO} found by Lieb and Oxford.

First we rewrite the first term of~\eqref{eq:new_lower} in a different form. By scaling, $|x-y|D(\nu_x,\nu_y)$ and $|x-y|D(\mu_x,\nu_y)$ are in fact functions of $\rho(x)^{\frac13}|x-y|$ and $\rho(y)^{\frac13}|x-y|$ only. More explicitly, using the radial symmetry of $\mu$ and $\nu$, we can express
\begin{equation*}
D(\mu_x,\nu_y)=\frac1{2|x-y|}\iint_{\R^3\times\R^3}\frac{\rd\mu(u)\,\rd\nu(v)}{\left|e_1+\frac{u}{|x-y|\rho(x)^{\frac13}}-\frac{v}{|x-y|\rho(y)^{\frac13}}\right|},
\end{equation*}
where $e_1=(1,0,0)$. This suggests to introduce the function
\begin{equation}
\boxed{\Phi_{\mu\nu}(a,b):=a^3b^3\left(1-\iint_{\R^3\times\R^3}\frac{\rd\mu(u)\,\rd\nu(v)}{\left|e_1+u/a-v/b\right|}\right)}
\label{eq:def_Phi_mu}
\end{equation}
for $a,b>0$. A different way of writing the same is
\begin{equation}
\Phi_{\mu\nu}(a,b):=a^3b^3\Big(1-2D(\mu_{0,a},\nu_{e_1,b})\Big)
\label{eq:def_Psi_mu}
\end{equation}
where $\mu_{v,a}:=a^3\mu(a(\cdot-v))$ denotes the measure $\mu$ dilated by $1/a$ and placed at $v$. The term in the parenthesis of~\eqref{eq:def_Phi_mu} and~\eqref{eq:def_Psi_mu} represents the difference between the Coulomb interaction of two point charges placed at distance one, and the same system with the two point charges smeared using the radial measures $\mu$ and $\nu$, at the scales $1/a$ and $1/b$, respectively.
By~\eqref{eq:Newton}, we have
\begin{equation}
 2D(\mu_{0,a},\nu_{e_1,b})\leq 2D(\mu_{0,a},\delta_{e_1})=a\,V_\mu(ae_1)\leq1,
 \label{eq:estim_potential}
\end{equation}
where $V_\mu:=\mu\ast|x|^{-1}$ denotes the Coulomb potential generated by $\mu$. Hence $\Phi_{\mu\nu}$ is non-negative. It is continuous on $(0,\ii)^2$ and can be extended by continuity to $[0,\ii)^2$. It vanishes on the boundary due to the factor $a^3b^3$.
If the support of $\mu$ and $\nu$ are both included in the ball of radius $r$, it follows by Newton's theorem~\eqref{eq:Newton} that $\Phi_{\mu\nu}$ vanishes for $a^{-1}+b^{-1}\leq r$.

After symmetrizing in $x$ and $y$, we can express our error term~\eqref{eq:new_lower} as
\begin{multline}
\iint_{\R^3\times\R^3}\rho(x)\rho(y)\big(D(\nu_{x},\nu_y)+D(\delta_x,\delta_y)-2D(\mu_x,\nu_y)\big)\rd x\\
=\frac12\iint_{\R^3\times\R^3}\frac{\Psi_{\mu\nu}\left(|x-y|\rho(x)^{\frac13},|x-y|\rho(y)^{\frac13}\right)}{|x-y|^7}\rd x\,\rd y,
\label{eq:re-expressed_Psi_mu}
\end{multline}
with the new function
$$\boxed{\Psi_{\mu\nu}=\Phi_{\mu\nu}+\Phi_{\nu\mu}-\Phi_{\nu\nu}.}$$
The function $\Psi_{\mu\nu}$ is continuous on the quadrant $[0,\ii)^2$ and vanishes on the boundary. If $\nu=\mu$ then we simply have $\Psi_{\mu\mu}=\Phi_{\mu\mu}$ and then $\Psi_{\mu\mu}\geq0$. If $\nu=\delta_0$ as in~\cite{LieOxf-80}, then $\Psi_{\mu\delta_0}=\Phi_{\mu\delta_0}+\Phi_{\delta_0\mu}$ is also non-negative. In general, $\Psi_{\mu\nu}$ has no sign, however.

Our last task is to estimate the right side of~\eqref{eq:re-expressed_Psi_mu}. Our main new observation is that for any function $f$ so that
\begin{equation}
\Psi_{\mu\nu}(a,b)\leq f(a)+f(b),\qquad\forall a,b\geq0,
 \label{eq:Psi_fa_fb}
\end{equation}
we can immediately bound
\begin{multline*}
\frac12\iint_{\R^3\times\R^3}\frac{\Psi_{\mu\nu}\left(|x-y|\rho(x)^{\frac13},|x-y|\rho(y)^{\frac13}\right)}{|x-y|^7}\rd x\,\rd y\\
\leq \iint_{\R^3\times\R^3}\frac{f\left(|x-y|\rho(x)^{\frac13}\right)}{|x-y|^7}\rd x\,\rd y=\left(\int_{\R^3}\rho(x)^{\frac43}\rd x\right)\left(\int_{\R^3}\frac{f(|z|)}{|z|^7}\rd z\right).
\end{multline*}
The last equality is obtained by first integrating over $y$ using the new variable $z=(y-x)\rho(x)^{1/3}$. Inserting this in~\eqref{eq:new_lower} we obtain the Lieb-Oxford inequality
$$E_{\rm ind}[\rho]\geq -\left(\int_{\R^3}\frac{f(|z|)}{|z|^7}\rd z+D(\mu,\mu)\right)\int_{\R^3}\rho(x)^{\frac43}\,\rd x.$$
It remains to understand the class of $f$ so that~\eqref{eq:Psi_fa_fb} holds, and then to optimize over $\mu$, $\nu$ and $f$ to obtain the smallest possible constant.

We will assume throughout that $f$ is continuous on $\R_+$. Due to the strong divergence of $|z|^{-7}$ at the origin, the finiteness of the integral requires that $f(0)=0$. But then we have due to the constraint~\eqref{eq:Psi_fa_fb}
$$f(a)\geq \Psi_{\mu\nu}(a,0)-f(0)=0,\qquad \forall a\geq0$$
since $\Psi_{\mu\nu}$ vanishes on the boundary of $[0,\ii)^2$. Hence, even if $\Psi_{\mu\nu}$ has in general no sign, we have to consider non-negative functions $f$. For any radial probability measures $\mu,\nu$ so that $D(\mu,\mu)<\ii$, we thus introduce the set
\begin{equation}
 \cF_{\mu\nu}:=\Big\{f\in C^0(\R_+,\R_+)\ :\ \Psi_{\mu\nu}(a,b)\leq f(a)+f(b)\text{ for all $a,b\in\R_+$}\Big\}
 \label{eq:def_cF_mu}
\end{equation}
as well as the corresponding minimization problem
\begin{equation}
\boxed{ I(\mu,\nu):=\inf_{f\in\cF_{\mu\nu}}\int_{\R^3}\frac{f(|z|)}{|z|^7}\rd z.}
 \label{eq:def_cI_mu}
\end{equation}
In the definition of $\cF_{\mu\nu}$ we can freely replace $\Psi_{\mu\nu}$ by its positive part $(\Psi_{\mu\nu})_+$.
We will prove later in Lemma~\ref{lem:simple_estim_I_mu} that $\cF_{\mu\nu}$ is not empty and contains a function $f$ such that $\int_{\R^3}|z|^{-7}f(|z|)\rd z<\ii$, hence $I(\mu,\nu)<\ii$.

We note that if we replace $\mu$ and $\nu$ by $\mu_t=t^3\mu(t\cdot)$ and $\nu_t=t^3\nu(t\cdot)$, then $D(\mu_t,\mu_t)=tD(\mu,\mu)$ whereas
$\Psi_{\mu_t\nu_t}(a,b)=t^{-6}\Psi_{\mu\nu}(ta,tb)$ hence $\cF_{\mu_t\nu_t}=\{t^{-6}f(t\cdot)\ :\ f\in\cF_{\mu\nu}\}$ and $I(\mu_t,\nu_t)=t^{-2}I(\mu,\nu)$. After optimizing over $t$, we obtain our final scaling-invariant upper bound on the best Lieb-Oxford constant.

\begin{theorem}[Main estimate]\label{thm:main}
The best Lieb-Oxford constant
$$c_{\rm LO}=\sup_{\substack{\rho\geq0\\ \int_{\R^3}\rho\in\N}}\frac{-E_{\rm ind}[\rho]}{\int_{\R^3}\rho(x)^{\frac43}\rd x}$$
satisfies
\begin{equation}
\boxed{c_{\rm LO}\leq \frac{3}{2}\left(\inf_{\mu,\nu} 2I(\mu,\nu)D(\mu,\mu)^2\right)^{\frac13}}
\label{eq:main}
\end{equation}
where the infimum is over all radial probability measures $\mu,\nu$ such that $D(\mu,\mu)<\ii$.
\end{theorem}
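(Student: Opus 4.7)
The plan is to package the chain of estimates developed in the preceding pages into a single scaling-invariant statement. Collecting the lower bound \eqref{eq:new_lower}, rewriting the error term via the identity \eqref{eq:re-expressed_Psi_mu}, and applying the majorization \eqref{eq:Psi_fa_fb} together with the change of variables $z=(y-x)\rho(x)^{1/3}$ displayed just above, I obtain, for every admissible density $\rho$, every pair of radial probabilities $\mu,\nu$ with $D(\mu,\mu)<\infty$, and every $f\in\cF_{\mu\nu}$, the bound
$$E_{\rm ind}[\rho]\;\geq\;-\left(\int_{\R^3}\frac{f(|z|)}{|z|^7}\,\rd z+D(\mu,\mu)\right)\int_{\R^3}\rho(x)^{\frac43}\,\rd x.$$
Taking the infimum over $f\in\cF_{\mu\nu}$ replaces the integral of $f/|z|^7$ by $I(\mu,\nu)$, and rearranging the definition of $c_{\rm LO}$ then yields the preliminary inequality
$$c_{\rm LO}\;\leq\;I(\mu,\nu)+D(\mu,\mu).$$

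Next I would invoke the dilation $\mu\mapsto\mu_t=t^3\mu(t\cdot)$, $\nu\mapsto\nu_t=t^3\nu(t\cdot)$ discussed just before the statement. The identity $D(\mu_t,\mu_t)=tD(\mu,\mu)$ is a direct scaling computation. For $I$, the relation $\Psi_{\mu_t\nu_t}(a,b)=t^{-6}\Psi_{\mu\nu}(ta,tb)$ shows that $f\mapsto t^{-6}f(t\cdot)$ is a bijection between $\cF_{\mu\nu}$ and $\cF_{\mu_t\nu_t}$; the change of variables $w=tz$ in $\int f(|z|)|z|^{-7}\,\rd z$ then gives $I(\mu_t,\nu_t)=t^{-2}I(\mu,\nu)$. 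Substituting $(\mu_t,\nu_t)$ into the preliminary inequality above yields, for every $t>0$,
$$c_{\rm LO}\;\leq\;t^{-2}I(\mu,\nu)+t\,D(\mu,\mu).$$

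The last step is a single-variable optimization. With $I=I(\mu,\nu)$ and $D=D(\mu,\mu)$, the right side is minimized at $t_*=(2I/D)^{1/3}$ (obtained from $-2t^{-3}I+D=0$), and plugging back in gives
$$t_*^{-2}I+t_*D\;=\;\tfrac{3}{2}\bigl(2\,I(\mu,\nu)\,D(\mu,\mu)^2\bigr)^{1/3}.$$
Taking the infimum over $(\mu,\nu)$ then produces the bound \eqref{eq:main}. The hard part is essentially behind us: the nontrivial ingredients (Newton's theorem, the Onsager smearing, and the decoupling $\Psi_{\mu\nu}\leq f(a)+f(b)$) have all been carried out in the pages above. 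The only genuinely new step is the scaling optimization, and the only point worth verifying with some care is the scaling identity $I(\mu_t,\nu_t)=t^{-2}I(\mu,\nu)$, which I would check via the explicit change of variables described above to ensure that the bijection between $\cF_{\mu\nu}$ and $\cF_{\mu_t\nu_t}$ is valid and preserves the membership constraint for every $t>0$.
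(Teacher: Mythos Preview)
Your proposal is correct and follows essentially the same argument as the paper: the paper derives $E_{\rm ind}[\rho]\geq -(I(\mu,\nu)+D(\mu,\mu))\int\rho^{4/3}$ in the pages preceding the statement, records the scaling relations $D(\mu_t,\mu_t)=tD(\mu,\mu)$ and $I(\mu_t,\nu_t)=t^{-2}I(\mu,\nu)$ exactly as you do, and then says ``after optimizing over $t$'' to obtain \eqref{eq:main}. Your explicit computation of the optimal $t_*=(2I/D)^{1/3}$ and the resulting value $\tfrac32(2ID^2)^{1/3}$ simply fills in that one-line optimization.
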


As we explain below in Remark~\ref{rmk:link_LO}, our main bound~\eqref{eq:main} is strictly better than the Lieb-Oxford bound~\cite{LieOxf-80} which was later numerically optimized by Chan and Handy in~\cite{ChaHan-99}. This was confirmed by numerical simulations. We managed to construct two trial measures $\mu$ and $\nu$ (see Figure~\ref{fig:optimal_mu} below), such that the right side of~\eqref{eq:main} is slightly below $1.58$, after evaluating $I(\mu,\nu)$ and $D(\mu,\mu)$ numerically. This is how we get~\eqref{eq:LO}. The details of the numerical method are explained later in Section~\ref{sec:numerics}. In Section~\ref{sec:prop_Psi}, we discuss some useful mathematical properties of the variational problem $I(\mu,\nu)$, which also play a role in the numerical implementation.

\begin{remark}
\rm Theorem~\ref{thm:main} applies  the same to the grand-canonical indirect energy, for which $\int_{\R^3}\rho$ can be any positive real number~\cite{LewLieSei-19_ppt}.
\hfill $\diamond$
\end{remark}

\begin{remark}[Link with the Lieb-Oxford proof]\label{rmk:link_LO}\rm
Let us quickly explain the link with the Lieb-Oxford proof~\cite{LieOxf-80}  for $\nu=\delta_0$. In this case the function $\Psi_{\mu\delta_0}$ has separated variables and reads
$$\Psi_{\mu \delta_0}(a,b)=a^3b^3(1-bV_\mu(be_1))+b^3a^3(1-aV_\mu(ae_1)).$$
We recall that $V_\mu=\mu\ast|x|^{-1}$ denotes the Coulomb potential generated by~$\mu$.
Let $\zeta(a)$ be the \emph{smallest non-decreasing function above} $\chi(a)=a^3(1-aV_\mu(ae_1))$. If $\mu$ has compact support, then so does $\chi$ by Newton. Since $\chi$ is then bounded and behaves like $a^3$ at the origin, $\zeta$ is a bounded continuous function such that $\zeta(a)\sim a^3$ when $a\to0$. It turns out that the function $f(a)=a^3\zeta(a)$ belongs to $\cF_{\mu\delta_0}$, since
$$\Psi_{\mu \delta_0}(a,b)=a^3\chi(b)+b^3\chi(a)\leq a^3\zeta(b)+b^3\zeta(a)\leq a^3\zeta(a)+b^3\zeta(b).$$
This is because $\zeta$ is non-decreasing, hence
$$a^3\zeta(a)+b^3\zeta(b)-a^3\zeta(b)-b^3\zeta(a)=(a^3-b^3)(\zeta(a)-\zeta(b))\geq0.$$
Inserting this in~\eqref{eq:main} we find
\begin{equation}
c_{\rm LO}\leq \frac32\left(2 D(\mu,\mu)^2\int_{\R^3}\frac{\zeta(|z|)}{|z|^4}\rd z \right)^{\frac13}.
\label{eq:LO_bound_slight_improve}
\end{equation}
The integral converges since $\zeta(|z|)$ is bounded and behaves like $|z|^3$ at the origin. Lieb and Oxford did not consider the smallest increasing function $\zeta$ above $\chi$. Instead, they used the simpler (and generally larger) increasing function $\xi$ obtained by taking the primitive of the positive part of the derivative of $\chi$:
$$\zeta(a)\leq \xi(a)=\int_0^a \chi'(s)_+\rd s.$$
This way they obtained
$$\int_{\R^3}\frac{\zeta(|z|)}{|z|^4}\rd z\leq \int_{\R^3}\frac{\xi(|z|)}{|z|^4}\,\rd z=\int_{\R^3}\big(4V_\mu(x)+x\cdot\nabla V_\mu(x)\big)_+\rd x.$$
Plugging this into~\eqref{eq:LO_bound_slight_improve} provides the Lieb-Oxford estimate~\cite{LieOxf-80}
\begin{equation}
c_{\rm LO}\leq \frac32\left(2 D(\mu,\mu)^2\int_{\R^3}\big(4V_\mu(x)+x\cdot\nabla V_\mu(x)\big)_+\rd x \right)^{\frac13}.
\label{eq:LO_bound}
\end{equation}
For $\mu$ the uniform measure of the ball, Lieb and Oxford found $c_{\rm LO}\leq 1.68$ in~\eqref{eq:LO_bound}. After optimizing over $\mu$ numerically, Chan and Handy found $c_{\rm LO}\leq 1.64$ in \cite{ChaHan-99}.

One can show that the function $\zeta$ (hence also $\xi$) can never be an optimizer for $I(\mu,\delta_0)$. This implies that the constant obtained by taking $\nu=\delta_0$ in~\eqref{eq:main} is strictly below the one found in~\cite{LieOxf-80,ChaHan-99}. We optimized $I(\mu,\delta_0)$ numerically with respect to $\mu$ and only obtained the constant $1.63$ instead of $1.64$. It is really necessary to optimize $\nu$ as well in order to substantially decrease the constant.\hfill$\diamond$
\end{remark}

\section{An estimate on the exchange energy}\label{sec:exchange}
In this section we use the previous approach to provide a better Lieb-Oxford bound for a special class of states including Slater determinants (Hartree-Fock states). In particular we deduce a bound on the exchange energy.

The indirect energy of $\bP$ in~\eqref{eq:cE_ind_bP} can be expressed in terms of the two-point correlation function in~\eqref{eq:rho_2} as
$$\cE_{\rm ind}[\bP]=\frac12\iint_{\R^3\times\R^3}\frac{\rho^{(2)}_\bP(x,y)-\rho_\bP(x)\rho_\bP(y)}{|x-y|}\dx\,\dy.$$
The numerator involves the so-called \emph{truncated} two-point correlation function $\rho^{(2)}_\bP(x,y)-\rho_\bP(x)\rho_\bP(y)$. The following applies to any state which has negative truncated correlations.

\begin{theorem}[Lieb-Oxford bound for negatively--correlated states]\label{thm:negative_correlations}
Let $\bP$ be a symmetric probability on $(\R^3)^N$ which has a pointwise negative truncated two-point correlation function:
\begin{equation}
\rho^{(2)}_\bP(x,y)-\rho_\bP(x)\rho_\bP(y)\leq 0,\qquad \forall x,y\in\R^3.
\label{eq:negative_correlations}
\end{equation}
If $\rho_\bP\in L^{4/3}(\R^3)$, then we have
\begin{equation}
\boxed{\cE_{\rm ind}[\bP]\geq -1.2490\int_{\R^3}\rho_\bP(x)^{\frac43}\,\dx.}
\label{eq:LO_neg_correlation}
\end{equation}
If $\rho_\bP^{1/3}\in (L^1\cap L^\ii)(\R^3)$, we also have the inequality
\begin{equation}
\cE_{\rm ind}[\bP]
\geq
-\frac32 \left(\frac{\pi}{6}\right)^{\frac13}\|\rho_\bP\|_{L^\ii}^{\frac13}\left(\int_{\R^3}\rho_\bP(x)^{\frac13}\,\dx\right)^{\frac13}\left(\int_{\R^3}\rho_\bP(x)^{\frac43}\,\dx\right)^{\frac23}.
 \label{eq:simpler_exchange}
\end{equation}
When $\rho_\bP$ is constant on its support, this reduces to~\eqref{eq:LO_neg_correlation} with the constant $\frac32 \left(\frac{\pi}{6}\right)^{1/3}\simeq 1.2090$ in place of $1.2490$.
\end{theorem}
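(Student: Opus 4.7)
The plan is to start from the reformulation
\[\cE_{\rm ind}[\bP] = -\frac12\iint_{\R^3\times\R^3} \frac{h(x,y)}{|x-y|}\,\rd x\,\rd y,\qquad h(x,y) := \rho_\bP(x)\rho_\bP(y)-\rho^{(2)}_\bP(x,y),\]
and to exploit only two features of $h$: the pointwise bound $0\leq h(x,y)\leq \rho_\bP(x)\rho_\bP(y)$ (immediate from~\eqref{eq:negative_correlations}) and the integral identity $\int_{\R^3} h(x,y)\,\rd x = \rho_\bP(y)$, which follows from $\int_{\R^3}\rho^{(2)}_\bP(x,y)\,\rd x = (N-1)\rho_\bP(y)$ together with $\int_{\R^3}\rho_\bP\,\rd x = N$.

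For~\eqref{eq:simpler_exchange} I would apply the bathtub principle in the inner integral at fixed $y$. Under the weaker pointwise bound $h(\cdot,y)\leq \|\rho_\bP\|_\infty\rho_\bP(y)$ and the mass constraint $\int h(\cdot,y)\,\rd x = \rho_\bP(y)$, the maximum of $\int h(x,y)/|x-y|\,\rd x$ is attained when $h(\cdot,y)$ saturates its upper bound on the concentric ball $B(y,r_*)$ with $r_* = (3/(4\pi\|\rho_\bP\|_\infty))^{1/3}$, yielding the explicit estimate
\[\int_{\R^3}\frac{h(x,y)}{|x-y|}\,\rd x \;\leq\; \|\rho_\bP\|_\infty\rho_\bP(y)\cdot 2\pi r_*^2 \;=\; 2\pi\left(\tfrac{3}{4\pi}\right)^{2/3}\|\rho_\bP\|_\infty^{1/3}\rho_\bP(y).\]
Integrating in $y$ and simplifying produces $-\cE_{\rm ind}[\bP]\leq \tfrac32(\pi/6)^{1/3}\|\rho_\bP\|_\infty^{1/3}\int\rho_\bP$, and Hölder's inequality with conjugate exponents $3$ and $3/2$, namely $\int\rho_\bP \leq \bigl(\int\rho_\bP^{1/3}\bigr)^{1/3}\bigl(\int\rho_\bP^{4/3}\bigr)^{2/3}$, delivers~\eqref{eq:simpler_exchange}. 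This Hölder inequality is saturated exactly when $\rho_\bP$ is constant on its support, giving the stated constant $\tfrac32(\pi/6)^{1/3}\simeq 1.2090$ in that reference case.

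For the universal bound~\eqref{eq:LO_neg_correlation} with constant $1.2490$, the plan is to combine the bathtub idea with the smearing framework of Section~\ref{sec:theory}. Newton's theorem applied to the smeared charges $\mu_{x_j}$ with $f=\sum_j\mu_{x_j}$, together with the exact identity $\bE_\bP[D(f,f)] = D(\mu,\mu)\int\rho_\bP^{4/3}+\iint D(\mu_u,\mu_v)\rho^{(2)}_\bP(u,v)\,\rd u\,\rd v$, yields the lower bound $\cE_{\rm ind}[\bP]\geq \iint D(\mu_u,\mu_v)\rho^{(2)}_\bP(u,v)\,\rd u\,\rd v - D(\rho_\bP,\rho_\bP)$, the two $D(\mu,\mu)\int\rho_\bP^{4/3}$ contributions cancelling between the Newton correction and the self-interaction term in $\bE_\bP[D(f,f)]$. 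Writing $\rho^{(2)}_\bP(u,v) = \rho_\bP(u)\rho_\bP(v) - h(u,v)$ and recalling the $\Phi_{\mu\mu}$-representation of $D(\delta_u,\delta_v)-D(\mu_u,\mu_v)$ from Section~\ref{sec:theory}, this recasts as
\begin{multline*}
-\cE_{\rm ind}[\bP]\leq \frac12\iint\frac{\Phi_{\mu\mu}\bigl(|u-v|\rho_\bP(u)^{\frac13},|u-v|\rho_\bP(v)^{\frac13}\bigr)}{|u-v|^7}\,\rd u\,\rd v\\
 + \iint D(\mu_u,\mu_v)h(u,v)\,\rd u\,\rd v.
\end{multline*}
The first integral is handled exactly as in the proof of Theorem~\ref{thm:main}, by choosing $f\in\cF_{\mu\mu}$ with $\Phi_{\mu\mu}(a,b)\leq f(a)+f(b)$ and bounding it by $I(\mu,\mu)\int\rho_\bP^{4/3}$. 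The second is controlled by the bathtub principle applied to $h(\cdot,v)$ against the kernel $D(\mu_u,\mu_v)$; at least for $\rho_\bP$ uniform this gives a clean bound of the form $J(\mu)\int\rho_\bP^{4/3}$ with $J(\mu):=\int_{B(0,(3/(4\pi))^{1/3})}D(\mu,\mu(\cdot-z))\,\rd z$. Adding the two contributions and minimizing numerically over the radial probability measure $\mu$ then produces the stated constant $1.2490$.

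The chief obstacle is the bathtub step on the second integral when $\rho_\bP$ is non-uniform: the kernel $D(\mu_u,\mu_v)$ then depends on both $\rho_\bP(u)^{1/3}$ and $\rho_\bP(v)^{1/3}$ and is not automatically radially monotone in $|u-v|$ at fixed $v$. A safe workaround is to use the Newton upper bound $D(\mu_u,\mu_v)\leq 1/(2|u-v|)$ before applying the bathtub; the resulting radial one-body integral combines with $I(\mu,\mu)$ into a scaling-invariant functional of $\mu$ whose numerical minimum is the source of the constant in~\eqref{eq:LO_neg_correlation}.
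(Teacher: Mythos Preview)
Your bathtub argument for~\eqref{eq:simpler_exchange} is correct and arguably cleaner than the paper's route. The paper instead subtracts $2\lambda\rho(x)^{1/3}$ from $1/|x-y|$, keeps only the positive part (using $\rho^{(2)}_T\leq0$), bounds $\rho^{(2)}_T\geq-\rho(x)\rho(y)$ and then $\rho(y)\leq\|\rho\|_{L^\infty}$, integrates the resulting one-body term explicitly, and optimizes over $\lambda$. Your version yields the intermediate bound $-\cE_{\rm ind}[\bP]\leq\tfrac32(\pi/6)^{1/3}\|\rho_\bP\|_{L^\infty}^{1/3}\int\rho_\bP$, which by H\"older is at least as strong as~\eqref{eq:simpler_exchange} and coincides with it when $\rho_\bP$ is constant on its support.

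Your route to the main bound~\eqref{eq:LO_neg_correlation}, however, has a genuine gap. After the smearing step you correctly arrive at $-\cE_{\rm ind}[\bP]\leq A+B$ with $A$ the $\Phi_{\mu\mu}$--integral and $B=\iint D(\mu_u,\mu_v)\,h(u,v)$. But the proposed ``safe workaround'' $D(\mu_u,\mu_v)\leq 1/(2|u-v|)$ gives $B\leq\tfrac12\iint h/|u-v|=-\cE_{\rm ind}[\bP]$ \emph{exactly}, so the inequality collapses to the triviality $0\leq A$. Once Newton is applied, $\mu$ has disappeared from $B$ and there is no ``scaling-invariant functional of $\mu$'' left to minimize; the claimed source of the constant $1.2490$ is simply absent. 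The cancellation of the two $D(\mu,\mu)\int\rho^{4/3}$ contributions that you noticed is precisely what makes smearing unproductive here: without the Onsager step there is no self-energy term to balance against $I(\mu,\mu)$.

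The paper's argument is quite different and uses no smearing. It subtracts $\lambda\rho(x)^{1/3}+\lambda\rho(y)^{1/3}$ from $1/|x-y|$ and converts the subtracted piece into $-\lambda\int\rho^{4/3}$ via the sum rule. The remaining integrand factorizes as
\[
\frac{\rho^{(2)}_T(x,y)}{\rho(x)\rho(y)}\cdot\frac{\Theta(a,b)}{\lambda^6|x-y|^7},\qquad \Theta(a,b)=a^3b^3(1-a-b),\quad a=\lambda\rho(x)^{\frac13}|x-y|,\ b=\lambda\rho(y)^{\frac13}|x-y|.
\]
One then decouples $\Theta(a,b)\leq f(a)+f(b)$ with $f\geq0$ (as in Section~\ref{sec:theory}) and uses the \emph{two-sided} bound $-1\leq\rho^{(2)}_T/(\rho\rho)\leq0$ to replace the prefactor by $-1$. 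Optimizing over $f$ and $\lambda$ gives the constant $\tfrac32(2J)^{1/3}$ with $J=\inf_f\int_{|x|\leq1}|x|^{-7}f(|x|)\,\rd x\leq0.2887$, hence $1.2490$. The mechanism you are missing is that negativity of correlations is exploited \emph{after} the $f(a)+f(b)$ decoupling, on the sign of the whole integrand, rather than through a bathtub estimate on $h$ alone.
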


An important example of states satisfying~\eqref{eq:negative_correlations} is given by $\bP$ of the form~\eqref{eq:bP_Psi} where $\Psi=(N!)^{-\frac12}\det(\phi_j(x_k,\sigma_k))$ is a Slater determinant (Hartree-Fock state). In this case we have
$$\rho^{(2)}_\bP(x,y)-\rho_\bP(x)\rho_\bP(y)=-|\gamma(x,y)|^2\leq0$$
where $\gamma(x,y)_{\sigma\sigma'}=\sum_{j=1}^N\phi_j(x,\sigma)\overline{\phi_j(y,\sigma')}$ is the associated one-particle density matrix. We thus obtain the claimed estimate~\eqref{eq:LO_exchange} on the exchange energy. However, the class of states satisfying~\eqref{eq:negative_correlations} is more general.

The inequality~\eqref{eq:simpler_exchange} is non local and is only displayed for the convenience of the reader. It is more interesting for densities $\rho_\bP$ which are constant on their support, leading to the better Lieb-Oxford constant $1.2090$. Note that there exists an inequality similar to~\eqref{eq:simpler_exchange} for general states, with the constant $\frac35 \left(\frac{9\pi}{2}\right)^{1/3}\simeq 1.4508$~\cite{LieNar-75,LieOxf-80}. Hence $1.2090$ in~\eqref{eq:simpler_exchange} is an improvement over this constant. After taking a thermodynamic limit, this covers any translation-invariant point process with negative truncated two-point correlations. In statistical mechanics, this is typical of gas phases~\cite{Ruelle} at high temperature.

\begin{corollary}[Lieb-Oxford for negatively-correlated homogeneous processes]\label{cor:Jellium}
Let $\mathscr{P}$ be a translation-invariant point process on $\R^3$, with intensity $\rho>0$ and finite local second moment, such that
$$\rho^{(2)}_\mathscr{P}(x-y)\leq \rho^2,\qquad\forall x,y\in\R^3.$$
Then we have
\begin{multline}
\lim_{R\to\ii}\frac1{2|B_R|}\iint_{B_R\times B_R}\frac{\rho^{(2)}_\mathscr{P}(x-y)-\rho^2}{|x-y|}\dx\,\dy\\=\frac12\int_{\R^3}\frac{\rho^{(2)}_\mathscr{P}(\tau)-\rho^2}{|\tau|}\,\rd\tau\geq -\frac32 \left(\frac{\pi}{6}\right)^{\frac13}\rho^{\frac43}.
\label{eq:jellium_infinite}
\end{multline}
\end{corollary}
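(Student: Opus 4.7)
The plan is to derive Corollary \ref{cor:Jellium} from Theorem \ref{thm:negative_correlations}, specifically from the nonlocal estimate \eqref{eq:simpler_exchange}, via a finite-volume thermodynamic limit. For each $R>0$, I let $\bP_R$ denote the law of the restriction of $\mathscr{P}$ to the ball $B_R$, viewed as a (grand-canonical) probability on symmetric configurations of variable size in $B_R$. Translation invariance yields $\rho_{\bP_R}(x)=\rho\,\mathbf 1_{B_R}(x)$, while the hypothesis $\rho^{(2)}_\mathscr{P}(x-y)\leq\rho^2$ forces the truncated two-point correlation of $\bP_R$ to be pointwise nonpositive. In particular $\rho_{\bP_R}^{1/3}\in (L^1\cap L^\ii)(\R^3)$, as required.

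Assuming the derivation of \eqref{eq:simpler_exchange} in Section \ref{sec:exchange} extends from fixed $N$ to the grand-canonical framework (the Onsager-type smearing argument relies only on the one- and two-point densities together with the pointwise inequality \eqref{eq:negative_correlations}, and the analogous extension is already noted after Theorem \ref{thm:main}), the bound applies to $\bP_R$. A direct computation from the definitions gives
\begin{equation*}
\cE_{\rm ind}[\bP_R]=\frac12\iint_{B_R\times B_R}\frac{\rho^{(2)}_\mathscr{P}(x-y)-\rho^2}{|x-y|}\,\dx\,\dy,
\end{equation*}
while inserting $\rho_{\bP_R}=\rho\,\mathbf 1_{B_R}$ in the right side of \eqref{eq:simpler_exchange} and using $\tfrac13+\tfrac19+\tfrac89=\tfrac43$ collapses the product of the three norms into $\rho^{4/3}|B_R|$. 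Dividing by $|B_R|$ therefore gives the lower bound of \eqref{eq:jellium_infinite} uniformly in $R$.

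It remains to identify the $R\to\infty$ limit of the left side with $\tfrac12\int_{\R^3}(\rho^{(2)}_\mathscr{P}(\tau)-\rho^2)|\tau|^{-1}\,\rd\tau$. The change of variables $\tau=x-y$ rewrites it as $\int_{\R^3} g(\tau)|\tau|^{-1}\phi_R(\tau)\,\rd\tau$ with $g(\tau):=\tfrac12(\rho^{(2)}_\mathscr{P}(\tau)-\rho^2)\leq 0$ and $\phi_R(\tau):=|B_R\cap(B_R-\tau)|/|B_R|\in[0,1]$, satisfying $\phi_R(\tau)\to 1$ pointwise as $R\to\infty$. Since $g\leq 0$, the previous uniform inequality shows that $\int|g(\tau)|\phi_R(\tau)|\tau|^{-1}\,\rd\tau$ is bounded in $R$; Fatou's lemma then yields $\int_{\R^3}|g(\tau)||\tau|^{-1}\,\rd\tau<\infty$, and dominated convergence delivers the claimed identity of the limit with the full-space integral.

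The principal obstacle I anticipate is the grand-canonical step itself: one has to verify that no argument leading to \eqref{eq:simpler_exchange} secretly uses a fixed total particle number. Should that extension raise any difficulty, a fallback would be to condition on $\{N_R=n\}$, apply \eqref{eq:simpler_exchange} to each fixed-$n$ conditional state (whose density is no longer constant but is controlled by $\rho$), and average back against the law of $N_R$; the $\rho_\bP\mapsto\int\rho_\bP^{4/3}$ term concentrates by translation invariance, and the other factors can be handled by Jensen-type bounds, at the price of some extra technical work.
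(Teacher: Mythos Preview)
Your overall strategy—restrict $\mathscr{P}$ to $B_R$, apply the nonlocal bound \eqref{eq:simpler_exchange} to the resulting state with constant density $\rho\,\1_{B_R}$, divide by $|B_R|$ and pass to the limit—is exactly the paper's route. Your treatment of the limit identity (change of variables $\tau=x-y$, Fatou to obtain $g/|\tau|\in L^1$, then dominated convergence) is in fact more explicit than what the paper writes down.

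The gap lies in your justification of the grand-canonical extension. The derivation of \eqref{eq:simpler_exchange} is not an Onsager smearing; its very first line,
\[
\frac12\iint\frac{\rho^{(2)}_T(x,y)}{|x-y|}\,\dx\,\dy \;=\; \frac12\iint\rho^{(2)}_T(x,y)\Big(\frac1{|x-y|}-2\lambda\rho(x)^{\frac13}\Big)\,\dx\,\dy \;-\; \lambda\int\rho^{\frac43},
\]
relies on the sum rule \eqref{eq:sum_rule}, $\int\rho^{(2)}_T(x,y)\,\dy = -\rho(x)$. This identity holds for fixed-$N$ states but fails for the grand-canonical restriction $\bP_R$, since $\iint_{B_R^2}\rho^{(2)}_T = \mathrm{Var}(N_R)-\bE[N_R]$ and the variance need not vanish. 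The paper fills the gap by proving the one-sided replacement $\int_{\R^3}\rho^{(2)}_T(\tau)\,\rd\tau \geq -\rho$ (this is precisely $\mathrm{Var}\geq 0$ after passing to the limit); combined with the hypothesis $\rho^{(2)}_T\leq 0$ one gets, for every $x\in B_R$,
\[
\int_{B_R}\rho^{(2)}_T(x,y)\,\dy\;\geq\;\int_{\R^3}\rho^{(2)}_T(\tau)\,\rd\tau\;\geq\;-\rho,
\]
which turns the displayed equality into the inequality ``$\geq$'' and lets the rest of the proof of \eqref{eq:simpler_exchange} go through unchanged. That single observation is the missing ingredient in your argument.

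Your fallback of conditioning on $\{N_R=n\}$ is unsafe: the conditional $n$-particle laws need not inherit the pointwise negative-correlation property \eqref{eq:negative_correlations}, so Theorem~\ref{thm:negative_correlations} would not apply to them.
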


Due to the translation-invariance, we wrote $\rho^{(2)}_\mathscr{P}(x-y)$ instead of $\rho^{(2)}_\mathscr{P}(x,y)$. The left side of~\eqref{eq:jellium_infinite} is also called the Jellium energy per unit volume~\cite[Lem.~33]{Lewin-22} or the renormalized energy~\cite{BorSer-13,Leble-16,LebSer-17} of the point process~$\mathscr{P}$. We give the proof of Corollary~\ref{cor:Jellium} after the one of Theorem~\ref{thm:negative_correlations}. For the free electron gas we have
$$\rho^{(2)}_\mathscr{P}(\tau)-\rho^2=-\left|\int_{|k|^2\leq\mu}e^{ik\cdot\tau}\,\frac{\rd k}{(2\pi)^d}\right|^2$$
for the chemical potential $\mu=4\pi^2\rho^{2/d}|B_1|^{-2/d}$. In 3D we get the Dirac constant $(3/4)\left(6/\pi\right)^{1/3}\simeq0.9305$~\cite{Dirac-28b} after dividing by $2|\tau|\rho^{4/3}$ and integrating over $\tau$.

\begin{proof}[Proof of Theorem~\ref{thm:negative_correlations}]
Our proof is inspired by the one of Theorem~\ref{thm:main} described in the previous section, in the special case that  $\mu$ is the uniform measure of some sphere and $\nu=\delta_0$. No smearing of charges is needed, though, and we write the argument in a more direct way without explicitly introducing $\mu$ and $\nu$. Let  $\rho^{(2)}_T(x,y):=\rho^{(2)}_\bP(x,y)-\rho_\bP(x)\rho_\bP(y)$ denote the truncated two-point correlation function and let $\rho(x):=\rho_\bP(x)$ be the density. We write
\begin{align*}
&\frac12\iint_{\R^3\times\R^3}\frac{\rho^{(2)}_T(x,y)}{|x-y|}\dx\,\dy\\
&=\frac12\iint_{\R^3\times\R^3}\rho^{(2)}_T(x,y)\left(\frac1{|x-y|}-\lambda\rho(x)^{\frac13}-\lambda\rho(y)^{\frac13}\right)\dx\,\dy-\lambda\int_{\R^3}\rho^{\frac43}\\
&=\frac12\iint_{\R^3\times\R^3}\frac{\rho^{(2)}_T(x,y)}{\rho(x)\rho(y)}\frac{\Theta\Big(\lambda\rho(x)^{\frac13}|x-y|,\lambda\rho(y)^{\frac13}|x-y|\Big)}{\lambda^6|x-y|^7}\dx\,\dy-\lambda\int_{\R^3}\rho^{\frac43}
\end{align*}
where, this time,
\begin{equation}
\boxed{\Theta(a,b)=a^3b^3(1-a-b).}
\label{eq:def_Theta}
\end{equation}
In the second line we have used the sum rule
\begin{equation}
 \int_{\R^3}\rho_T^{(2)}(x,y)\,\dy=\int_{\R^3}\rho_T^{(2)}(y,x)\,\dy=-\rho(x),
 \label{eq:sum_rule}
\end{equation}
since $\rho^{(2)}_T$ comes from an $N$-particle probability $\bP$.
If $\Theta(a,b)\leq f(a)+f(b)$ with $f\geq0$, we obtain
\begin{align*}
\cE_{\rm ind}[\bP]&=\frac12\iint_{\R^3\times\R^3}\frac{\rho^{(2)}_T(x,y)}{|x-y|}\dx\,\dy\\
&\geq \iint_{\R^3\times\R^3}\frac{\rho^{(2)}_T(x,y)}{\rho(x)\rho(y)}\frac{f\Big(\lambda\rho(x)^{\frac13}|x-y|\Big)}{\lambda^6|x-y|^7}\dx\,\dy-\lambda\int_{\R^3}\rho^{\frac43}\\
&\geq -\iint_{\R^3\times\R^3}\frac{f\Big(\lambda\rho(x)^{\frac13}|x-y|\Big)}{\lambda^6|x-y|^7}\dx\,\dy-\lambda\int_{\R^3}\rho^{\frac43}\\
&=-\left(\lambda^{-2}\int_{\R^3}\frac{f(|x|)}{|x|^7}\dx+\lambda\right)\int_{\R^3}\rho^{\frac43},
\end{align*}
where we have used that $0\geq \rho^{(2)}_T\geq -\rho(x)\rho(y)$.
Similarly to $I(\mu,\nu)$ in~\eqref{eq:def_cI_mu}, we introduce
\begin{equation}
J:=\inf_{\substack{f\in C^0([0,1],\R)\\ \Theta(a,b)\leq f(a)+f(b)}}\int_{|x|\leq 1}\frac{f(|x|)}{|x|^7}\dx
 \label{eq:def_J}
\end{equation}
and obtain after optimizing over $\lambda$
$$\boxed{\cE_{\rm ind}[\bP]\geq-\frac32\left(2J\right)^{\frac13}\int_{\R^3}\rho(x)^{\frac43}\,\dx.}$$
This is our improved Lieb-Oxford bound for states with negative truncated correlations.

In order to provide an estimate on $J$, we introduce the following function
\begin{align}
 g(a)&:=\max_{0\leq b\leq a}\left\{\Theta(a,b)-\frac{\Theta(b,b)_+}2\right\}\nn\\
 &=\max_{0\leq b\leq a}\left\{a^3b^3(1-a-b)-\frac{b^6}2(1-2b)_+\right\}.
 \label{eq:def_g_neg_corr}
\end{align}
A similar function $g$ plays a role for $I(\mu,\nu)$ later in Lemma~\ref{lem:G}. Taking $b=0$ we obtain $g\geq0$ and taking $b=a$ we find that $g(a)\geq \Theta(a,a)_+/2$. We also see that $g$ is supported on $[0,1]$ with $g(1)=0$.
We claim that $g$ is admissible for the infimum in~\eqref{eq:def_J}. This is because by definition we have
$$\Theta(a,b)\leq g(a)+\frac{\Theta(b,b)_+}2\leq g(a)+g(b),\qquad\forall 0\leq b\leq a.$$
By symmetry of $\Theta$, this proves that $g$ satisfies the constraint for all $a,b$, and thus
\begin{equation}
 J\leq \int_{\R^3}\frac{g(|x|)}{|x|^7}\,\dx=4\pi\int_0^1 a^{-5}g(a)\,\rd a.
 \label{eq:estim_J}
\end{equation}
We have observed numerically that $g$ is in fact the \emph{exact minimizer} of~$J$. More about this can be read in Remark~\ref{rmk:optimality_G} below. However we do not need to prove this optimality for the upper bound.

The integral of $g$ on the right side of~\eqref{eq:estim_J} can be computed to an arbitrary precision. We have
$$\frac\partial{\partial b}\left(\Theta(a,b)-\frac{\Theta(b,b)}{2}\right)=b^2(b-a)\Big(-3 a^2(1-a) + (7 a-3)a b + (7 a-3) b^2 + 7 b^3\Big)$$
where the third degree polynomial in the parenthesis has only one real root $b=R(a)$ for $a\in[0,1]$ (the other two are complex). It is possible to compute $R(a)$ exactly but we do not display its expression here. The maximum in~\eqref{eq:def_g_neg_corr} is attained at
\begin{equation}
b(a)=\min\big(a,R(a)\big)=\begin{cases}
a&\text{for $0\leq a\leq3/8,$}\\
R(a)&\text{for $3/8\leq a\leq1,$}
\end{cases}
 \label{eq:b_a}
\end{equation}
(see Figure~\ref{fig:roots_g} below). In fact, for $a\leq 3/8$, $b=a$ is the only possible maximum on $[0,a]$ and for $a>3/8$, $b=a$ becomes a strict local minimum. The maximum must thus be attained at $b=R(a)$. We conclude that
$$g(a)=\Theta\big(a,b(a)\big)-\frac{\Theta\big(b(a),b(a)\big)}2.$$
One can then evaluate the integral in~\eqref{eq:estim_J} to an arbitrary precision, leading to the bound $J\leq 0.2887$ and to the LO constant $1.2490$ claimed in~\eqref{eq:LO_neg_correlation}.

In order to prove the non-local bound~\eqref{eq:simpler_exchange}, we argue similarly but do not symmetrize in $x$ and $y$:
\begin{align*}
&\frac12\iint_{\R^3\times\R^3}\frac{\rho^{(2)}_T(x,y)}{|x-y|}\dx\,\dy\nn\\
&\qquad=\frac12\iint_{\R^3\times\R^3}\rho^{(2)}_T(x,y)\left(\frac1{|x-y|}-2\lambda\rho(x)^{\frac13}\right)\dx\,\dy-\lambda\int_{\R^3}\rho^{\frac43}\nn\\
&\qquad\geq\frac12\iint_{\R^3\times\R^3}\rho^{(2)}_T(x,y)\left(\frac1{|x-y|}-2\lambda\rho(x)^{\frac13}\right)_+\dx\,\dy-\lambda\int_{\R^3}\rho^{\frac43}\nn\\
&\qquad\geq-\frac12\iint_{\R^3\times\R^3}\rho(x)\rho(y)\left(\frac1{|x-y|}-2\lambda\rho(x)^{\frac13}\right)_+\dx\,\dy-\lambda\int_{\R^3}\rho^{\frac43}.
\end{align*}
Now we estimate $\rho(y)\leq\|\rho\|_{L^\ii}$ and integrate over $y$ to obtain
\begin{equation}
 \cE_{\rm ind}[\bP]=\frac12\iint_{\R^3\times\R^3}\frac{\rho^{(2)}_T(x,y)}{|x-y|}\dx\,\dy\geq-\frac{\pi\|\rho\|_{L^\ii}}{12\lambda^2}\int_{\R^3}\rho^{\frac13}-\lambda\int_{\R^3}\rho^{\frac43}.
 \label{eq:simple_rho}
\end{equation}
Optimizing over $\lambda$ yields~\eqref{eq:simpler_exchange}.
\end{proof}

\begin{proof}[Proof of Corollary~\ref{cor:Jellium}]
Denoting the number of particles in a bounded domain $A$ by $\cN_A$, we have for a general translation-invariant point process with finite local second moment
$$ \iint_{A\times A}\rho^{(2)}_T(x-y)\,\dx\,\dy=\bE[\cN_A(\cN_A-1)]-\bE[\cN_A]^2\geq -\bE[\cN_A]=-\rho|A|.$$
Taking $A=B_R$ and passing to the limit $R\to\ii$ gives
\begin{equation}
 \int_{\R^3}\rho^{(2)}_T(\tau)\,\rd\tau\geq-\rho.
 \label{eq:replaces_sum_rule}
\end{equation}
We then apply the previous estimate~\eqref{eq:simple_rho} to the restriction $\mathscr{P}_R$ of $\mathscr{P}$ to the ball $B_R$, using~\eqref{eq:replaces_sum_rule} in place of the sum rule~\eqref{eq:sum_rule}.
\end{proof}

\begin{figure}[t]
\includegraphics[width=6.2cm]{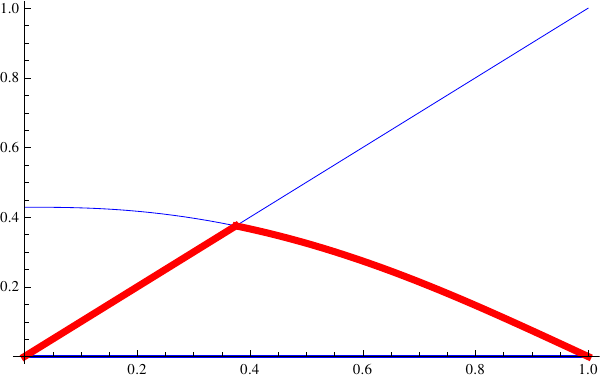} \includegraphics[width=6.2cm]{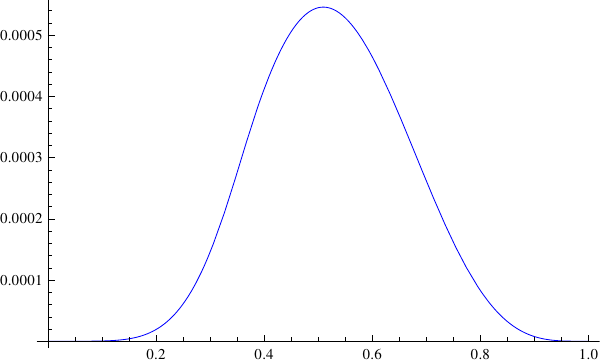}
\caption{\footnotesize \emph{Left:} Critical points of the function $b\mapsto \Theta(a,b)-\Theta(b,b)/2$ in terms of $a\in[0,1]$. The maximum in~\eqref{eq:def_g_neg_corr} is attained at $b=b(a)$, which is the minimum  of the two positive roots (thicker curve on the picture). It is equal to $a$ for $a\leq3/8$ and is then decreasing and reaches 0 at $a=1$. \emph{Right:} The corresponding function $g(a)=\Theta(a,b(a))-\Theta(b(a),b(a))/2$, which equals $\Theta(a,a)/2=a^6(1-2a)$ for $a\leq3/8$.\label{fig:roots_g}}
\end{figure}

\begin{remark}[Optimality of $g$]\label{rmk:optimality_G}\rm
We have $b(a)\leq 3/8<1/2$ for all $a\in[0,1]$ since the second root $R(a)$ is decreasing. This implies that $g(b(a))=\Theta(b(a),b(a))_+/2$ for all $a$ and, therefore, we can rewrite the definition~\eqref{eq:def_g_neg_corr} in the form of a fixed point equation
\begin{equation}
g(a)=\max_{0\leq b\leq a}\big\{\Theta(a,b)-g(b)\big\}.
\label{eq:g_SCF}
\end{equation}
As we have seen in the proof we also know that $g(a)\geq\Theta(a,b)-g(b)$ for all $a,b\in[0,1]$, which implies that
$$g(a)\geq \max_{0\leq b\leq 1}\big\{\Theta(a,b)-g(b)\big\}\geq \max_{0\leq b\leq a}\big\{\Theta(a,b)-g(b)\big\}=g(a).$$
Hence there must be equality and we have as well
\begin{equation}
g(a)=\max_{0\leq b\leq 1}\big\{\Theta(a,b)-g(b)\big\},
\label{eq:g_SCF2}
\end{equation}
where $b$ is now allowed to reach $1$ in the maximum. In fact, any solution to~\eqref{eq:g_SCF} must clearly be a solution to~\eqref{eq:g_SCF2}.
As we will prove later in Theorem~\ref{thm:existence_f} in a slightly more general context, the minimizing problem $J$ in~\eqref{eq:def_J} possesses minimizers, which must all solve the fixed point equation~\eqref{eq:g_SCF2}. This makes $g$ a good candidate for being a minimizer for $J$.

It turns out that $g$ is the \emph{unique non-negative solution to the first equation~\eqref{eq:g_SCF}}. To prove this fact, consider another solution $f\geq0$. Taking $b=a$ we find $f(a)\geq\Theta(a,a)_+/2$ and thus
$$f(a)=\max_{0\leq b\leq a}\big\{\Theta(a,b)-f(b)\big\}\leq \max_{0\leq b\leq a}\big\{\Theta(a,b)-\Theta(b,b)_+/2\big\}=g(a).$$
But then we also have
$$f(a)=\max_{0\leq b\leq a}\big\{\Theta(a,b)-f(b)\big\}\geq \max_{0\leq b\leq a}\big\{\Theta(a,b)-g(b)\big\}=g(a)$$
which shows that $f=g$.

On the other hand, $g$ is \emph{not} the unique solution of the second equation~\eqref{eq:g_SCF2}. There are many other solutions. One example is obtained by requiring $b\geq a$ instead of $b\leq a$:
$$h(a)=\max_{a\leq b\leq 1}\left\{\Theta(a,b)-\frac{\Theta(b,b)_+}2\right\}.$$
Similar arguments as for $g$ show that $h$ solves the equation~\eqref{eq:g_SCF2}. However, $h$ behaves like $a^3$ at the origin, so that $\int |x|^{-7}h(|x|)\,\dx=+\ii$ and $h$ is not an interesting solution for our minimization problem $J$ in~\eqref{eq:def_J}. While there are many solutions to the equation~\eqref{eq:g_SCF2}, $g$ is clearly the smallest on $[0,3/8]$ (where it coincides with the lower bound $\Theta(a,a)_+/2$) and hence we expect it to be the optimizer for $J$.
\hfill$\diamond$
\end{remark}

\section{Some properties of $\Psi_{\mu\nu}$ and $I(\mu,\nu)$}\label{sec:prop_Psi}

In this section we prove some properties of $\Psi_{\mu\nu}$ and $I(\mu,\nu)$ appearing in our Lieb-Oxford bound~\eqref{eq:main}, which will be useful in the numerical implementation described in Section~\ref{sec:numerics}.

First, we can compute $\Psi_{\mu\nu}$ explicitly for $\mu=\nu=\sigma$, the uniform measure of a sphere.

\begin{lemma}[Case of the sphere]\label{lem:sphere}
Let $\sigma$ be the uniform measure of the unit sphere, normalized as $\sigma(\R^3)=\sigma(\bS^2)=1$. Then we have
\begin{equation}
\Psi_{\sigma\sigma}(a,b)=\Phi_{\sigma\sigma}(a,b)=\frac{a^2b^2}{4}\Big((a+b-ab)_+^2-(|a-b|-ab)_+^2\Big).
\label{eq:Psi_sphere}
\end{equation}
In particular, $\Psi_{\sigma\sigma}$ is $C^1$ on $(\R_+) ^2$ with the bound
\begin{equation}
 |\nabla\Psi_{\sigma\sigma}(a,b)|\leq C(a^2b^3+a^3b^2).
 \label{eq:Psi_sphere_derivative}
\end{equation}
\end{lemma}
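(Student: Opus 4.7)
The plan is to compute $\Phi_{\sigma\sigma}(a,b)=a^3b^3\bigl(1-2D(\sigma_{0,a},\sigma_{e_1,b})\bigr)$ explicitly using Newton's theorem, verify the claimed closed form by case analysis, and read off the $C^1$ regularity and the gradient bound from the factored expression. Throughout, $\Psi_{\sigma\sigma}=\Phi_{\sigma\sigma}$ since $\mu=\nu$.

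First I would use Newton's theorem: the Coulomb potential generated by $\sigma_{0,a}$ (supported on the sphere of radius $1/a$ about the origin) is $V_{\sigma_{0,a}}(y)=\min(a,1/|y|)$. Parameterising $y=e_1+\omega/b$ with $\omega\in\bS^2$ and using that the Archimedes projection $\omega\mapsto\omega\cdot e_1$ pushes the uniform measure on $\bS^2$ onto $\frac{1}{2}\1_{[-1,1]}(t)\,\rd t$, the substitution $u=|y|=\sqrt{1+2\omega\cdot e_1/b+1/b^2}$ reduces the integral to
$$2D(\sigma_{0,a},\sigma_{e_1,b})=\frac{b}{2}\int_{|1-1/b|}^{1+1/b}\min(au,1)\,\rd u.$$
I would then evaluate this integral in the three regimes determined by where $1/a$ sits relative to the interval $[|1-1/b|,1+1/b]$: disjoint spheres ($1/a+1/b\leq 1$), one sphere contained in the other ($|1/a-1/b|\geq 1$), or proper overlap. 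These cases respectively give $2D=1$, $2D=\min(a,b)$, and $1-2D=(a+b-ab)^2/(4ab)$; the overlap identity follows from rearranging $2(R_1+R_2)d-d^2-(R_1-R_2)^2=4R_1R_2-(R_1+R_2-d)^2$ with $R_1=1/a$, $R_2=1/b$, $d=1$.

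Second, I would check that the single closed form unifies all three cases. The key algebraic identity is the difference of squares
$$(a+b-ab)^2-(|a-b|-ab)^2=\bigl[a+b-|a-b|\bigr]\bigl[a+b+|a-b|-2ab\bigr]=4ab\bigl(1-\min(a,b)\bigr),$$
which handles the containment case (where both arguments are nonnegative) and reproduces $\Phi_{\sigma\sigma}=a^3b^3(1-\min(a,b))$. In the overlap case only the first positive part is nonzero and the formula collapses to $\frac{a^2b^2}{4}(a+b-ab)^2=a^3b^3(a+b-ab)^2/(4ab)$, matching the first step. In the disjoint case $a+b\leq ab$ (which, WLOG for $a\leq b$, forces $a\geq 1$ and hence $ab\geq b\geq b-a$), a short check shows that $|a-b|\leq ab$ too, so both positive parts vanish and $\Phi_{\sigma\sigma}=0$, consistent with $2D=1$.

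Third, for the $C^1$ regularity on $[0,\ii)^2$ I would use that $t\mapsto(t)_+^2$ is $C^1$ with derivative $2(t)_+$. The map $(a,b)\mapsto(a+b-ab)_+^2$ is therefore $C^1$ since its argument is smooth. For $(|a-b|-ab)_+^2$ the only potential issue is the diagonal $a=b$ (where $|a-b|$ is not smooth); but there $|a-b|-ab=-ab<0$, so the function is identically zero in a neighborhood of $\{a=b,\,ab>0\}$. On the coordinate axes $(|a-b|-ab)_+^2$ reduces to $(a-b)^2$, smooth, and is absorbed by the prefactor $a^2b^2/4$.

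Finally, for the gradient bound I would exploit $\Psi_{\sigma\sigma}(a,b)=a^3b^3\phi(a,b)$ with $\phi=1-2D\in[0,1]$. Differentiating $2D=\int\min(a,1/|y|)\,\rd\sigma_{e_1,b}(y)$ under the integral yields $\partial_a\phi=-\sigma_{e_1,b}(\{|y|<1/a\})\in[-1,0]$, hence $|\partial_a\Psi_{\sigma\sigma}|\leq 3a^2b^3+a^3b^3$. Whenever $\Psi_{\sigma\sigma}\neq 0$ the two spheres must at least meet, so $1/a+1/b\geq 1$, i.e.\ $ab\leq a+b$, and therefore $a^3b^3=ab\cdot a^2b^2\leq(a+b)a^2b^2=a^2b^3+a^3b^2$. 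This absorbs the cubic term and gives $|\partial_a\Psi_{\sigma\sigma}|\leq 4(a^2b^3+a^3b^2)$; the estimate for $\partial_b\Psi_{\sigma\sigma}$ follows by symmetry. The main obstacle throughout is keeping the three-case analysis organised at case boundaries; once the unified formula is established the $C^1$ property and the gradient bound are essentially free.
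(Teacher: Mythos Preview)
Your proof is correct. The paper itself does not supply a proof of this lemma, stating only that ``The proof is a tedious but elementary computation which we do not detail here.'' Your argument fills that gap cleanly: the reduction of $2D(\sigma_{0,a},\sigma_{e_1,b})$ to a one-dimensional integral via the Archimedes projection, the three-case evaluation (disjoint, containment, proper overlap), and the algebraic check that the single formula with positive parts unifies all cases are all carried out correctly. The one place to be slightly more explicit is the differentiation under the integral sign for $\partial_a\phi$: it is justified because the set $\{y:|y|=1/a\}$ is the intersection of two spheres, hence always of $\sigma_{e_1,b}$-measure zero, so the a.e.\ derivative of $\min(a,1/|y|)$ can be integrated. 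Your final observation---that on the support of $\Psi_{\sigma\sigma}$ one has $ab\leq a+b$, which converts the $a^3b^3|\partial_a\phi|$ contribution into $a^2b^3+a^3b^2$---is exactly the right way to close the gradient estimate, and the bound extends to the complement of the support since $\nabla\Psi_{\sigma\sigma}$ vanishes there by the $C^1$ regularity you established.
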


The proof is a tedious but elementary computation which we do not detail here.

Let us now turn our attention to general radial Borel measures $\mu,\nu$. Recall that later we will have to impose $D(\mu,\mu)<\ii$, which implies that $\mu(\{0\})=0$, that is, $\mu$ cannot include a delta measure at the origin. The radial measure $\nu$ is itself arbitrary and we write $\nu=\nu_0\delta_0+\nu'$, with $\nu'(\R^3)=1-\nu_0$ and $\nu'(\{0\})=0$.
We can then write the measures $\mu$ and $\nu'$ as a convex combination of uniform measures over spheres, which provides the formula
\begin{align}
\Psi_{\mu\nu}(a,b)=&\int_0^\ii\int_0^\ii\Psi_{\sigma\sigma}\left(\frac{a}{r},\frac{b}{s}\right) r^3s^3\rd m(r)\rd n(s)\nn\\
&+\int_0^\ii\int_0^\ii\Psi_{\sigma\sigma}\left(\frac{a}{r},\frac{b}{s}\right) r^3s^3\rd n(r)\rd m(s)\nn\\
&-\int_0^\ii\int_0^\ii\Psi_{\sigma\sigma}\left(\frac{a}{r},\frac{b}{s}\right) r^3s^3\rd n(r)\rd n(s)+2\nu_0^2a^3b^3\nn\\
&-a^3b^3\nu_0 \big(aV_{\nu'}(ae_1)+aV_\mu(ae_1)\big)-a^3b^3\nu_0 \big(bV_{\nu'}(be_1)+bV_\mu(be_1)\big)
\label{eq:Psi_mu_average_spheres}
\end{align}
for some Borel probability measure $m,n$ on $\R$ so that $m\big((0,\ii)\big)=1$ and $n\big((0,\ii)\big)=1-\nu_0$.
By Newton's theorem, we have
\begin{equation}
aV_\mu(ae_1)=\mu(\{|x|\leq a\})+a\int_{|x|\geq a}\frac{\rd \mu(x)}{|x|}
\label{eq:limit_V_mu}
\end{equation}
and a similar formula for $\nu'$. This implies that $aV_\mu(ae_1)$ has a locally bounded derivative on $(0,\ii)$, with
$$\frac{\rd}{\rd a} aV_\mu(ae_1)=\int_{|x|\geq a}\frac{\rd \mu(x)}{|x|}\leq \frac{1}{a}.$$
Thanks to the multiplying factor $a^3b^3$, the terms on the last line of~\eqref{eq:Psi_mu_average_spheres} are thus $C^{0,1}$ on $[0,\ii)$, with a derivative bounded by $C(a^2b^3+a^3b^2)$. The theory of integrals depending on a parameter and the estimate~\eqref{eq:Psi_sphere_derivative} then imply the following.

\begin{lemma}\label{lem:regularity_Psi}
Let $\mu$ and $\nu$ be radial probability measures such that
\begin{equation}
\mu(\{0\})=0,\qquad \int |x|\,\rd \mu(x)<\ii,\qquad \int |x|\,\rd \nu(x)<\ii.
 \label{eq:assumption_mu_nu}
\end{equation}
Then the function $\Psi_{\mu\nu}$ has a locally bounded derivative on $[0,\ii)^2$, with the same estimate~\eqref{eq:Psi_sphere_derivative}. It vanishes on the two axis $\{a=0\}\cup\{b=0\}$.
If $\nu(\{0\})=0$, then $\Psi_{\mu\nu}$ is in fact $C^1$.
\end{lemma}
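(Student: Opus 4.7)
The starting point is the averaging formula \eqref{eq:Psi_mu_average_spheres}, which expresses $\Psi_{\mu\nu}$ as a sum of three integrals of scaled copies of $\Psi_{\sigma\sigma}$, a polynomial correction, and four boundary terms carrying the prefactor $\nu_0:=\nu(\{0\})$. To derive it, I would disintegrate the radial measures as $\mu=\int_0^\infty \sigma_r\,\rd m(r)$ (with $m$ a probability on $(0,\infty)$ because $\mu(\{0\})=0$) and $\nu=\nu_0\delta_0+\int_0^\infty \sigma_s\,\rd n(s)$, where $\sigma_r$ is the uniform probability on the sphere of radius $r$. Bilinearity of $(\mu,\nu)\mapsto\Phi_{\mu\nu}$, the scaling identity $\Phi_{\sigma_r\sigma_s}(a,b)=r^3s^3\,\Phi_{\sigma\sigma}(a/r,b/s)=r^3s^3\,\Psi_{\sigma\sigma}(a/r,b/s)$ (using $\Phi_{\sigma\sigma}=\Psi_{\sigma\sigma}$), together with the elementary identities $\Phi_{\mu\delta_0}(a,b)=a^3b^3\bigl(1-aV_\mu(ae_1)\bigr)$, $\Phi_{\delta_0\mu}(a,b)=a^3b^3\bigl(1-bV_\mu(be_1)\bigr)$ and $\Phi_{\delta_0\delta_0}\equiv 0$ (all immediate from \eqref{eq:def_Psi_mu} and the radial symmetry of $V_\mu$), then produce \eqref{eq:Psi_mu_average_spheres} after grouping. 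Vanishing on the axes $\{a=0\}\cup\{b=0\}$ follows because every summand carries a factor $a^3b^3$.

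The main estimate is the differentiation under the two integral signs. Lemma~\ref{lem:sphere} gives $\Psi_{\sigma\sigma}\in C^1$, but the literal joint bound \eqref{eq:Psi_sphere_derivative} is not enough: after the chain rule on $r^3s^3\Psi_{\sigma\sigma}(a/r,b/s)$ it produces a factor $s/r$ that fails to be integrable against $\rd m\otimes \rd n$ under the first moment assumption~\eqref{eq:assumption_mu_nu} alone. The remedy is that the explicit formula \eqref{eq:Psi_sphere} actually yields the sharper \emph{coordinate-wise} bounds $|\partial_a\Psi_{\sigma\sigma}(a,b)|\leq C\,a^2b^3$ and $|\partial_b\Psi_{\sigma\sigma}(a,b)|\leq C\,a^3b^2$. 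After scaling they give, uniformly in $(r,s)\in(0,\infty)^2$,
\[
\bigl|\partial_a\bigl(r^3s^3\Psi_{\sigma\sigma}(a/r,b/s)\bigr)\bigr|\leq C\,a^2b^3,\qquad \bigl|\partial_b\bigl(r^3s^3\Psi_{\sigma\sigma}(a/r,b/s)\bigr)\bigr|\leq C\,a^3b^2.
\]
Since $\rd m\otimes\rd n$ has finite total mass, Leibniz's rule applies, so the three sphere--sphere integrals in \eqref{eq:Psi_mu_average_spheres} are $C^1$ with gradient bounded by $C(a^2b^3+a^3b^2)$.

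The four boundary terms from the last line of \eqref{eq:Psi_mu_average_spheres} are treated directly using Newton's formula \eqref{eq:limit_V_mu}, which gives the bounds $0\leq aV_\mu(ae_1)\leq 1$ and $\frac{\rd}{\rd a}\bigl(aV_\mu(ae_1)\bigr)=\int_{|x|\geq a}|x|^{-1}\rd\mu(x)\leq 1/a$, and analogously for $V_{\nu'}$. Applying the product rule to terms of the form $a^4b^3V_\mu(ae_1)=b^3\cdot a^3\bigl(aV_\mu(ae_1)\bigr)$ then yields $|\partial_a|\leq C\,a^2b^3$, and symmetrically for $\partial_b$ and for the $V_{\nu'}$ pieces; added to the previous paragraph this establishes the locally bounded derivative on $[0,\infty)^2$ with the bound \eqref{eq:Psi_sphere_derivative}. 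The only source of discontinuity in the derivatives is $\frac{\rd}{\rd a}(aV_\mu(ae_1))=\int_{|x|\geq a}|x|^{-1}\rd\mu(x)$, which jumps at radii charged by $\mu$ or $\nu'$; when $\nu_0=0$ the whole last line of \eqref{eq:Psi_mu_average_spheres} disappears, dominated convergence transfers continuity from $\partial_a\Psi_{\sigma\sigma}$ and $\partial_b\Psi_{\sigma\sigma}$ to $\Psi_{\mu\nu}$, and we conclude $\Psi_{\mu\nu}\in C^1([0,\infty)^2)$. The principal obstacle is the sharp coordinate-wise estimate used in paragraph two: the joint bound of Lemma~\ref{lem:sphere} must be refined into separate bounds on $\partial_a\Psi_{\sigma\sigma}$ and $\partial_b\Psi_{\sigma\sigma}$ before the chain rule is applied, otherwise the naive dominating function is not integrable with respect to $\rd m\otimes \rd n$.
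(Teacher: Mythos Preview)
Your proposal is correct and follows the same strategy as the paper's proof, which is little more than a one-line sketch: the paper invokes the averaging formula~\eqref{eq:Psi_mu_average_spheres}, bounds the $\nu_0$-boundary terms via Newton's formula~\eqref{eq:limit_V_mu}, and then simply writes ``the theory of integrals depending on a parameter and the estimate~\eqref{eq:Psi_sphere_derivative} then imply the following.''

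Your version is more careful on one point. You correctly observe that the literal joint bound $|\nabla\Psi_{\sigma\sigma}(a,b)|\leq C(a^2b^3+a^3b^2)$ stated in~\eqref{eq:Psi_sphere_derivative}, when fed through the chain rule on $r^3s^3\Psi_{\sigma\sigma}(a/r,b/s)$, produces a factor $s/r$ (and symmetrically $r/s$) which is not controlled by the first-moment hypothesis~\eqref{eq:assumption_mu_nu}. Your fix---the separate bounds $|\partial_a\Psi_{\sigma\sigma}|\leq Ca^2b^3$ and $|\partial_b\Psi_{\sigma\sigma}|\leq Ca^3b^2$---is correct (one checks it directly from the piecewise-polynomial formula~\eqref{eq:Psi_sphere}; in the region $|a-b|>ab$ one has $\Psi_{\sigma\sigma}=a^3b^3(1-\min(a,b))$ with $\min(a,b)<1$, and in the region $|a-b|\leq ab$ the constraint forces $a+b-ab\leq 2\min(a,b)$, which gives the desired control). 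With these coordinate-wise bounds the dominating function becomes $Ca^2b^3$ (resp.\ $Ca^3b^2$), independent of $(r,s)$, and only the finite total mass of $m\otimes n$ is needed. This is almost certainly what the authors had in mind; the combined form~\eqref{eq:Psi_sphere_derivative} is just the sum of the two coordinate-wise estimates.
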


It will be convenient to assume that $\mu$ and $\nu$  have compact support.
Due to the scaling invariance of~\eqref{eq:main}, we can then always suppose that the support is included in the unit ball $B_1$. The following is a simple preliminary bound.

\begin{lemma}[Well-posedness of $I(\mu,\nu)$]\label{lem:simple_estim_I_mu}
For any radial probability measures $\mu$ and $\nu$ supported in the unit ball, we have for the function $\Phi_{\mu\nu}$ in~\eqref{eq:def_Phi_mu}
\begin{equation}
0\leq \Phi_{\mu\nu}(a,b)\leq a^3b^3\1(a^{-1}+b^{-1}>1)\leq \min(a^6, 4a^3)+\min(b^6, 4b^3)
 \label{eq:simple_F}
\end{equation}
and hence
\begin{equation}
\Psi_{\mu\nu}(a,b)\leq 2a^3b^3\1(a^{-1}+b^{-1}>1)\leq \min(2a^6, 8a^3)+\min(2b^6, 8b^3).
 \label{eq:simple_F_Psi}
\end{equation}
This implies that the function $f(a)=\min(2a^6, 8a^3)$ always belongs to $\cF_{\mu\nu}$ and gives
\begin{equation}
I(\mu,\nu)\leq 2^{\frac43}12\pi,\qquad I(\mu,\mu)\leq 2^{\frac13}12\pi
\label{eq:simple_estim_I_mu}
\end{equation}
for all such $\mu$ and $\nu$.
\end{lemma}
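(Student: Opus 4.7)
The plan is to bound $\Phi_{\mu\nu}$ pointwise by a separated function of $a$ and $b$, transfer this to $\Psi_{\mu\nu}$, and finally evaluate a one-dimensional integral. Starting from the expression $\Phi_{\mu\nu}(a,b)=a^3b^3\bigl(1-2D(\mu_{0,a},\nu_{e_1,b})\bigr)$ in~\eqref{eq:def_Psi_mu}, Newton's inequality~\eqref{eq:Newton} delivers $0\leq 2D(\mu_{0,a},\nu_{e_1,b})\leq 1$ and hence $0\leq\Phi_{\mu\nu}(a,b)\leq a^3b^3$. Because $\mu$ and $\nu$ are supported in $\overline{B_1}$, the rescaled measures $\mu_{0,a}$ and $\nu_{e_1,b}$ live in $\overline{B_{1/a}(0)}$ and $\overline{B_{1/b}(e_1)}$ respectively. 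When $a^{-1}+b^{-1}\leq 1$ these balls are disjoint, so a double application of Newton's theorem --- first $V_{\mu_{0,a}}(y)=1/|y|$ on the support of $\nu_{e_1,b}$ (which lies in $\{|y|\geq 1/a\}$), and then $V_{\nu_{e_1,b}}(0)=1/|e_1|=1$ --- gives $2D(\mu_{0,a},\nu_{e_1,b})=1$ and thus $\Phi_{\mu\nu}(a,b)=0$. Combining the two cases yields $\Phi_{\mu\nu}(a,b)\leq a^3b^3\1(a^{-1}+b^{-1}>1)$.

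The second inequality in~\eqref{eq:simple_F} is purely algebraic. Setting $\alpha=a^3$, $\beta=b^3$ and dividing by $\alpha\beta$, it reads
$$\1(\alpha^{-1/3}+\beta^{-1/3}>1)\leq \frac{\min(\alpha,4)}{\beta}+\frac{\min(\beta,4)}{\alpha}.$$
For $\alpha,\beta\leq 4$ this follows from AM--GM: $\alpha/\beta+\beta/\alpha\geq 2$. When $\alpha\leq 4<\beta$ (or symmetrically), the single term $4/\alpha\geq 1$ already suffices, so the indicator plays no role there. The indicator is genuinely needed only in the case $\alpha,\beta>4$, where the claim reduces to $1/\alpha+1/\beta\geq 1/4$; this follows from $\alpha^{-1/3}+\beta^{-1/3}>1$ by minimising $u^3+v^3$ on $u+v=s\geq 1$, giving $s^3/4\geq 1/4$ at $u=v=s/2$. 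The $\Psi$-bound~\eqref{eq:simple_F_Psi} is then immediate from $\Psi_{\mu\nu}=\Phi_{\mu\nu}+\Phi_{\nu\mu}-\Phi_{\nu\nu}$ together with $\Phi_{\nu\nu}\geq 0$.

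The rest is direct computation. By~\eqref{eq:simple_F_Psi}, the function $f(a)=\min(2a^6,8a^3)$ lies in $\cF_{\mu\nu}$. Passing to spherical coordinates and splitting at $r=4^{1/3}$, where the two branches coincide,
$$\int_{\R^3}\frac{f(|z|)}{|z|^7}\,\rd z=4\pi\int_0^\ii \min\bigl(2r,8r^{-2}\bigr)\,\rd r=4\pi\bigl(2^{4/3}+2^{7/3}\bigr)=12\pi\cdot 2^{4/3},$$
which proves the first half of~\eqref{eq:simple_estim_I_mu}. For $\mu=\nu$ the cancellation $\Psi_{\mu\mu}=\Phi_{\mu\mu}$ saves a factor of two, so the half-size function $\min(a^6,4a^3)$ already lies in $\cF_{\mu\mu}$ and the identical calculation yields $I(\mu,\mu)\leq 12\pi\cdot 2^{1/3}$.

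The only subtle point is the algebraic inequality of the second paragraph: without the indicator it plainly fails for $a,b$ both large, since $a^3b^3$ then dwarfs $4a^3+4b^3$, and it is only the constraint $a^{-1}+b^{-1}>1$ that saves the bound via $1/a^3+1/b^3\geq 1/4$. Everything else is a direct application of Newton's theorem and a one-variable integral.
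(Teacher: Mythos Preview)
Your proof is correct and follows essentially the same route as the paper: Newton's theorem for the first inequality in~\eqref{eq:simple_F}, a three-case algebraic split for the second, the bound $\Psi_{\mu\nu}\leq\Phi_{\mu\nu}+\Phi_{\nu\mu}$ for~\eqref{eq:simple_F_Psi}, and the explicit integral for~\eqref{eq:simple_estim_I_mu}. The only noteworthy difference is in the case $a,b>2^{2/3}$: the paper uses the ad hoc inequality $a^3\leq 4+4(a-1)^3$ combined with $b<a/(a-1)$ to reach $a^3b^3\leq 4a^3+4b^3$, whereas your convexity argument $u^3+v^3\geq (u+v)^3/4$ (with $u=1/a$, $v=1/b$) is cleaner and more transparent.
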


\begin{proof}
The definition~\eqref{eq:def_Phi_mu} gives immediately $\Phi_{\mu\nu}(a,b)\leq a^3b^3$ whereas
Newton's theorem implies that $\Phi_{\mu\nu}$ vanishes on the set $\{a^{-1}+b^{-1}\leq1\}$. This gives the first inequality in~\eqref{eq:simple_F}. We now prove the second. If $a,b\leq 2^{\frac23}$, then we write
$a^3b^3\leq a^6+b^6$.
If $a> 2^{\frac23}$ and $b\leq 2^{\frac23}$, we simply use  $a^3b^3\leq 4a^3\leq 4a^3+b^6$.
Finally, if $a,b> 2^{\frac23}$ we only have to consider the case $b<a/(a-1)$, due to the characteristic function. We use that $a^3\leq 4+4(a-1)^3$ for $a>2^{\frac23}$, which gives
$$b^3\leq \frac{a^3}{(a-1)^3}\leq \frac{4a^3}{a^3-4},$$
and thus
$ a^3b^3\leq 4(a^3+b^3)$,
as desired. If $\mu=\nu$, then we have $\Psi_{\mu\mu}=\Phi_{\mu\mu}$ and thus conclude that $\phi(a)=\min(a^6,4a^3)$ is in $\cF_{\mu\mu}$. The second estimate~\eqref{eq:simple_estim_I_mu} follows after plugging $\phi$ in~\eqref{eq:def_cI_mu}. If $\nu\neq\mu$, then we use $\Psi_{\mu\nu}\leq \Phi_{\mu\nu}+\Phi_{\nu\mu}$, which proves~\eqref{eq:simple_F_Psi} and that $f=2\phi$ is in $\cF_{\mu\nu}$. We get the extra factor $2$ in the first inequality of~\eqref{eq:simple_estim_I_mu}.
\end{proof}

Taking $\mu=\nu=\sigma$ and plugging~\eqref{eq:simple_estim_I_mu} into~\eqref{eq:main}, we obtain the simple bound $c_{\rm LO}\leq 4.3117$. This is not a very good bound but its proof is completely elementary.

The following is another function in $\cF_{\mu\nu}$ which will be useful, though less explicit.

\begin{lemma}\label{lem:G}
Let $\mu,\nu$ be two radial probability measures supported in the unit ball, such that $\mu(\{0\})=0$.
Define
\begin{equation}
\boxed{g(a):=\max_{0\leq b\leq a}\left\{\Psi_{\mu\nu}(a,b)-\frac{\Psi_{\mu\nu}(b,b)_+}{2}\right\}. }
\label{eq:def_G}
\end{equation}
Then we have $g\in\cF_{\mu\nu}$ with
\begin{equation}
 c\min(a^6,a^3)\leq g(a)\leq  2a^6\left(\1(a\leq 2)+\frac{\1(a>2)}{(a-1)^3}\right)\leq \min(2a^6,16a^3)
 \label{eq:simple_upper_G}
\end{equation}
for some $c>0$ depending only on $\mu$ and $\nu$. At infinity, we have
\begin{equation}
g(a)\underset{a\to\ii}{\sim} a^3\max_{0\leq b\leq1}b^3\big(1-bV_\mu(be_1)\big).
\label{eq:G_infinity}
\end{equation}
\end{lemma}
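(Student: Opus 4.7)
The plan is to verify the four claims in turn, using the symmetry of $\Psi_{\mu\nu}$ and the pointwise bound from Lemma~\ref{lem:simple_estim_I_mu}. For the membership $g \in \cF_{\mu\nu}$, I would first note that $\Psi_{\mu\nu}(a,b) = \Psi_{\mu\nu}(b,a)$: the symmetry of $D$ combined with the radiality of $\mu,\nu$ (via a translation by $-e_1$ and a rotation about the origin) yields $\Phi_{\mu\nu}(a,b) = \Phi_{\nu\mu}(b,a)$, which immediately symmetrizes $\Psi_{\mu\nu}$. Symmetry lets me reduce to $b \le a$, where the definition of $g(a)$ directly gives $\Psi_{\mu\nu}(a,b) \le g(a) + \Psi_{\mu\nu}(b,b)_+/2$. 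Evaluating the maximum defining $g(b)$ at the trial points $c = 0$ and $c = b$, and using that $\Psi_{\mu\nu}$ vanishes on the axes, yields $g(b) \ge \max(0,\Psi_{\mu\nu}(b,b)/2) = \Psi_{\mu\nu}(b,b)_+/2$; combining these inequalities finishes the first claim.

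For the upper bound~\eqref{eq:simple_upper_G}, I would drop the non-negative subtracted term in~\eqref{eq:def_G} to get $g(a) \le \max_{0 \le b \le a} \Psi_{\mu\nu}(a,b)$, and then invoke $\Psi_{\mu\nu}(a,b) \le 2 a^3 b^3 \1(a^{-1}+b^{-1}>1)$ from Lemma~\ref{lem:simple_estim_I_mu}. For $a \le 2$ the constraint $b \le a$ gives $2a^3 b^3 \le 2a^6$; for $a > 2$ the indicator forces $b < a/(a-1)$ and yields $2a^3 b^3 \le 2a^6/(a-1)^3$. The secondary bound $\min(2a^6, 16 a^3)$ is then an elementary comparison ($a^3 \le 8$ for $a \le 2$, and $a \le 2(a-1)$ for $a \ge 2$).

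The main technical work goes into the asymptotic~\eqref{eq:G_infinity}, from which the lower bound at infinity will also follow. Dividing by $a^3$, I would identify the pointwise limit of $\Phi_{\mu\nu}(a,b)/a^3$ as $a \to \ii$ with $b$ fixed: since $\mu(\{0\}) = 0$ and $\supp\mu \subset B_1$, the rescaled measure $\mu_{0,a}$ converges weakly to $\delta_0$, so
\[
2D(\mu_{0,a},\nu_{e_1,b}) = \int V_{\nu_{e_1,b}}(x)\,\rd\mu_{0,a}(x) \longrightarrow V_{\nu_{e_1,b}}(0) = b V_\nu(be_1).
\]
Analogous limits for $\Phi_{\nu\mu}$ and $\Phi_{\nu\nu}$ combine to give $\Psi_{\mu\nu}(a,b)/a^3 \to b^3(1 - b V_\mu(be_1))$. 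By Newton, $\Psi_{\mu\nu}(a,b) = 0$ whenever $b \ge a/(a-1)$, so the maximization in~\eqref{eq:def_G} is effectively restricted to $b \in [0, a/(a-1)]$, an interval which shrinks to $[0,1]$ as $a \to \ii$; dominated convergence (with dominant $2b^3$) upgrades the pointwise convergence to uniform convergence on any fixed compact interval, while the subtracted term $\Psi_{\mu\nu}(b,b)_+/(2a^3)$ tends to $0$ uniformly. The main obstacle will be to cleanly combine this uniform convergence with the shrinking domain to pass the max to the limit.

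Finally, for the lower bound $g(a) \ge c \min(a^3, a^6)$: the previous step gives $g(a) \ge c a^3$ at infinity, because the limiting constant $\max_{0 \le b \le 1} b^3(1 - bV_\mu(be_1))$ is strictly positive (at small $b$ the integrand behaves like $b^3$, since~\eqref{eq:limit_V_mu} together with $\mu(\{0\}) = 0$ force $bV_\mu(be_1) \to 0$ as $b \to 0^+$). For $a \to 0$, I would pick $b = a$ in~\eqref{eq:def_G} to obtain $g(a) \ge \Psi_{\mu\nu}(a,a)/2$, and then use the scaling identity $D(\mu_{0,a},\nu_{e_1,a}) = (a/2) \iint \rd\mu(u)\,\rd\nu(v)/|u-v-ae_1|$ together with the splitting $\nu = \nu_0 \delta_0 + \nu'$ to verify $\Psi_{\mu\nu}(a,a)/a^6 \to 1$, giving $g(a) \ge a^6/2$ for $a$ sufficiently small. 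Continuity and pointwise positivity of $g$ on any compact subinterval of $(0,\ii)$ (which follows from $g(a) \ge \Psi_{\mu\nu}(a,a)_+/2 > 0$ at every positive $a$) then handle the intermediate range.
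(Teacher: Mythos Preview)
Your approach is largely the paper's, and the arguments for $g\in\cF_{\mu\nu}$, the upper bound, and the asymptotic $g(a)\sim a^3\max_{0\le b\le1}b^3(1-bV_\mu(be_1))$ are all correct. There is, however, a genuine gap in your lower bound on the intermediate range. You write that pointwise positivity of $g$ on compact subintervals of $(0,\ii)$ ``follows from $g(a)\ge\Psi_{\mu\nu}(a,a)_+/2>0$ at every positive $a$'', but $\Psi_{\mu\nu}(a,a)>0$ is false in general. For instance, take $\mu=\sigma$ and $\nu=\delta_0$: then $\Psi_{\mu\delta_0}(a,a)=2a^6(1-aV_\sigma(ae_1))=2a^6(1-\min(a,1))$ vanishes for all $a\ge1$, so your bound gives nothing on $[1,2)$. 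More generally $\Psi_{\mu\nu}(a,a)$ can vanish (or even be negative, since $\Psi_{\mu\nu}$ has no sign when $\mu\neq\nu$) well before $a=2$.

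The fix is to evaluate the maximum~\eqref{eq:def_G} not at $b=a$ but at a \emph{small fixed} $b$. Using the pointwise bounds $\Psi_{\mu\nu}(a,b)\ge a^3b^3\big(1-bV_\mu(be_1)-bV_\nu(be_1)\big)$ and $\Psi_{\mu\nu}(b,b)_+\le 2b^6$, one gets for $b$ small enough (so that $bV_\mu(be_1)+bV_\nu(be_1)$ is close to $\nu(\{0\})<1$, assuming $\nu\neq\delta_0$) the inequality $g(a)\ge \tfrac12 a^3b^3(1-\nu(\{0\})) - b^6$, which is $\ge c\,a^3$ for all $a\ge a_0$ once $b$ is chosen small relative to $a_0$. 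This is exactly what the paper does, and it yields the lower bound for all $a\ge a_0$ in one stroke, making a separate intermediate-range argument unnecessary. (A smaller point: your claimed limit $\Psi_{\mu\nu}(a,a)/a^6\to1$ is not quite right either; the correct limit is $1+\nu(\{0\})^2$, as one sees by splitting off the $\nu_0\delta_0$ part and using $2D(\mu_{0,a},\nu'_{e_1,a})\le aV_\mu(ae_1)\to0$. Since $1+\nu_0^2\ge1$ this does not affect your small-$a$ conclusion.)
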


\begin{proof}
Note that $g\geq0$ since at $b=0$ we find $\Psi_{\mu\nu}(a,0)=\Psi_{\mu\nu}(0,0)=0$ by Lemma~\ref{lem:regularity_Psi}. By taking $b=a$ we obtain $g(a)\geq\Psi_{\mu\nu}(a,a)_+/2$.
If $b\leq a$ we have by definition
$$\Psi_{\mu\nu}(a,b)\leq g(a)+\frac{\Psi_{\mu\nu}(b,b)_+}{2}\leq g(a)+g(b).$$
By symmetry we conclude that $g\in\cF_{\mu\nu}$.
Since $\Psi_{\mu\nu}(a,b)\leq 2a^3b^3$, we have $g(a)\leq 2a^6$. To improve the bound for large $a$, we recall that $\Psi_{\mu\nu}(a,b)=0$ for $a^{-1}+b^{-1}\leq 1$. Therefore, whenever $a>1$, we have
$$g(a)\leq \max_{0\leq b\leq \frac{a}{a-1}}\Psi_{\mu\nu}(a,b)\leq 2\frac{a^6}{(a-1)^3}.$$
This concludes the proof of the upper bound in~\eqref{eq:simple_upper_G}.

Next we derive the lower bound on $g$. We assume first that $\nu\neq \delta_0$. Using Newton's theorem we obtain
\begin{align}
\Psi_{\mu\nu}(a,b)&=a^3b^3\big(1+2D(\nu_{0,a},\nu_{e_1,b})-2D(\nu_{0,a},\mu_{e_1,b})-2D(\mu_{0,a},\nu_{e_1,b})\big) \nn\\
&\geq a^3b^3\big(1-bV_\mu(be_1)-bV_\nu(be_1)\big)\label{eq:lower_bd_Psi}
\end{align}
and
$$\Psi_{\mu\nu}(a,b)\leq a^3b^3\big(1+bV_\nu(be_1)\big).$$
We can obtain a bound for all $a$ using~\eqref{eq:lower_bd_Psi} as follows
\begin{equation}
g(a)\geq \frac{\Psi_{\mu \nu}(a,a)_+}2\geq \frac{a^6}2\big(1-aV_\mu(ae_1)-aV_\nu(ae_1)\big)_+.
\label{eq:lower_bd_G}
\end{equation}
For small $a$, we have due to~\eqref{eq:limit_V_mu}
$$aV_\mu(ae_1)=\mu(\{|x|\leq a\})+a\int_{|x|\geq a}\frac{\rd \mu(x)}{|x|}\underset{a\to0}{\longrightarrow}\mu(\{0\})=0,$$
and similarly $aV_\nu(ae_1)\to \nu(\{0\})$. Thus the function on the right of~\eqref{eq:lower_bd_G} behaves like $a^6(1-\nu(\{0\}))/2>0$ for $a$ small. This proves that $g(a)\geq a^6(1-\nu(\{0\}))/4$ for $a\leq a_0$ small enough. For larger $a$'s we write instead
\begin{align*}
g(a)&\geq a^3b^3\big(1-bV_\mu(be_1)-bV_\nu(be_1)\big)-\frac{b^6}2\big(1+bV_\nu(be_1)\big)\\
&\geq \frac{a^3b^3}2\big(1-bV_\mu(be_1)-bV_\nu(be_1)\big)
\end{align*}
for a small but fixed $b\leq a_0$ chosen such that
$$a_0^3\big(1-bV_\mu(be_1)-bV_\nu(be_1)\big)\geq b^3\big(1+bV_\nu(be_1)\big).$$
This proves the lower bound $g(a)\geq ca^3$ for $a\geq a_0$.

When $\nu=\delta_0$ we can use that
\begin{equation}
\Psi_{\mu\delta_0}(a,b)=a^3b^3(2-aV_\mu(ae_1)-bV_\mu(be_1))\geq a^3b^3(1-bV_\mu(be_1))
\label{eq:formula_Psi_delta}
\end{equation}
and proceed similarly. This concludes the proof of~\eqref{eq:simple_upper_G}.

For the behavior~\eqref{eq:G_infinity} at infinity, we choose $b\in(0,1)$ realizing the maximum on the right side and use that $2D(\nu_{0,a},\nu_{e_1,b})\to bV_\nu(be_1)$ when $a\to\ii$, and a similar convergence for the two other terms of the first line of~\eqref{eq:lower_bd_Psi}. This proves that
$$\liminf_{a\to\ii}\frac{g(a)}{a^3}\geq \max_{0\leq b\leq1} b^3\big(1-bV_\mu(be_1)\big).$$
To show the reverse inequality, we take $a_n\to\ii$ realizing the limsup and call $b_n$ an optimizer for the maximum defining $g(a_n)$. We must have $b_n\leq a_n/(a_n-1)$, hence $b_n$ is bounded. After extraction of a subsequence, we can assume $b_n\to b$. Then
$$\frac{g(a_n)}{a_n^3}\leq b_n^3\big(1+b_nV_\nu(b_ne_1)-2D(\nu_{0,a_n},\mu_{e_1,b_n})-2D(\mu_{0,a_n},\nu_{e_1,b_n})\big)$$
and we use that the right side converges to $b^3(1-bV_\mu(be_1))$ when $a_n\to\ii$ and $b_n\to b$.
\end{proof}

We now introduce a truncated problem, which we will simulate on the computer. For $R>0$ we define
\begin{equation}
 \cF_{\mu\nu,R}:=\Big\{f\in C^0([0,R],\R_+)\ :\ \Psi_{\mu\nu}(a,b)\leq f(a)+f(b)\text{ for all $0\leq a,b\leq R$}\Big\}
 \label{eq:def_cF_mu_R}
\end{equation}
as well as the corresponding minimization problem
\begin{equation}
\boxed{ I_R(\mu,\nu):=\inf_{f\in\cF_{\mu\nu,R}}\int_{|z|\leq R}\frac{f(|z|)}{|z|^7}\rd z,}
 \label{eq:def_cI_mu_R}
\end{equation}
that is, we work on $[0,R]$ instead of $\R_+$. Since the restriction to $[0,R]$ of a function in $\cF_{\mu\nu}$ always belongs to $\cF_{\mu\nu,R}$, we obviously have the inequality
$$I(\mu,\nu)\geq I_R(\mu,\nu).$$
We are thus approaching $I(\mu,\nu)$ from below.
The following provides a quantitative estimate between the truncated and the original problems.

\begin{lemma}[Speed of convergence]\label{lem:speed_CV}
Let $\mu,\nu$ be two radial probability measures supported in the unit ball, such that $\mu(\{0\})=0$. We have
\begin{equation}
 I_R(\mu,\nu)\leq I(\mu,\nu)\leq I_R(\mu,\nu)+\int_{\R^3\setminus B_R}\frac{g(|z|)}{|z|^7}\rd z
 \label{eq:estimate_I_R}
\end{equation}
where $g$ is the function in Lemma~\ref{lem:G}.
\end{lemma}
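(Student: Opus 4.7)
The first inequality is immediate: restricting any $f\in\cF_{\mu\nu}$ to $[0,R]$ gives an element of $\cF_{\mu\nu,R}$, and since $f\geq 0$ we have $\int_{B_R}f(|z|)|z|^{-7}\rd z\leq\int_{\R^3}f(|z|)|z|^{-7}\rd z$; taking the infimum yields $I_R(\mu,\nu)\leq I(\mu,\nu)$.

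For the upper bound, my plan is to fix $\eps>0$, choose a near-minimizer $f_R\in\cF_{\mu\nu,R}$ with $\int_{B_R}f_R(|z|)|z|^{-7}\rd z\leq I_R(\mu,\nu)+\eps$, and glue it to $g$ past the radius $R$:
$$\tilde f(a):=\begin{cases} f_R(a), & 0\leq a\leq R,\\ g(a), & a>R.\end{cases}$$
Once $\tilde f\in\cF_{\mu\nu}$ is established, splitting the integral over $B_R$ and $\R^3\setminus B_R$ yields
$$I(\mu,\nu)\leq \int_{B_R}\frac{f_R(|z|)}{|z|^7}\rd z+\int_{\R^3\setminus B_R}\frac{g(|z|)}{|z|^7}\rd z\leq I_R(\mu,\nu)+\eps+\int_{\R^3\setminus B_R}\frac{g(|z|)}{|z|^7}\rd z,$$
and $\eps\to 0$ finishes the argument.

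\textbf{Main step.} The heart of the proof is the pointwise inequality $\Psi_{\mu\nu}(a,b)\leq\tilde f(a)+\tilde f(b)$ for all $a,b\geq 0$. The diagonal cases $a,b\leq R$ and $a,b>R$ are immediate from the defining constraints for $f_R\in\cF_{\mu\nu,R}$ and for $g\in\cF_{\mu\nu}$ respectively (Lemma~\ref{lem:G}). The mixed case $b\leq R<a$ is where the specific choice of $g$ pays off, and I would combine two ingredients. First, applying the $\cF_{\mu\nu,R}$ constraint on the diagonal $a=b$ together with $f_R\geq 0$ gives
$$f_R(b)\geq\tfrac12\Psi_{\mu\nu}(b,b)_+.$$
Second, since $b\leq R<a$ in particular forces $b\leq a$, the value $b$ is admissible in the max defining $g(a)$ in~\eqref{eq:def_G}, so
$$g(a)\geq\Psi_{\mu\nu}(a,b)-\tfrac12\Psi_{\mu\nu}(b,b)_+.$$
Summing these two inequalities cancels the $\Psi_{\mu\nu}(b,b)_+/2$ terms and produces exactly $\tilde f(a)+\tilde f(b)=g(a)+f_R(b)\geq\Psi_{\mu\nu}(a,b)$, as required.

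\textbf{Main obstacle.} The only delicate point is that $\tilde f$ may have a jump discontinuity at $a=R$ (if $f_R(R)\neq g(R)$), while $\cF_{\mu\nu}$ in~\eqref{eq:def_cF_mu} asks for $C^0$ functions. I expect to handle this by a standard approximation: on a thin interval $[R-\delta,R+\delta]$ replace $\tilde f$ by a continuous bridge that stays above $\max(f_R(R),g(R))$. Because both $f_R$ and $g$ satisfy the diagonal bound $2f(R)\geq\Psi_{\mu\nu}(R,R)$, continuity of $\Psi_{\mu\nu}$ (Lemma~\ref{lem:regularity_Psi}) guarantees that the constraint is preserved once $\delta$ is small enough, while the bridge contributes at most $O(\delta)$ to the integral, which can be absorbed into $\eps$ before letting $\eps\to0$.
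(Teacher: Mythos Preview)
Your proposal is correct and follows essentially the same route as the paper: glue a near-minimizer $f_R$ on $[0,R]$ to $g$ on $(R,\infty)$, verify the mixed-case constraint via the definition of $g$ together with $f_R(b)\geq\tfrac12\Psi_{\mu\nu}(b,b)_+$, and deal with the possible jump at $R$ by approximation. The only cosmetic difference is that the paper dispatches the continuity issue more directly by approximating $\tilde f$ \emph{from above} by continuous functions $\tilde f_n\searrow\tilde f$ (any pointwise majorant of $\tilde f$ automatically satisfies the constraint, so $\tilde f_n\in\cF_{\mu\nu}$), which is a touch cleaner than your bridge argument but amounts to the same thing.
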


Using~\eqref{eq:simple_upper_G} to estimate the integral of $g$, we obtain the explicit bound
\begin{equation}
I_R(\mu,\nu)\leq I(\mu,\nu) \leq I_R(\mu,\nu)+\frac{8\pi R^2}{(R-1)^3}.
 \label{eq:explicit_upper_I_R}
\end{equation}
In particular, $I_R(\mu,\nu)=I(\mu,\nu)+O(R^{-1})$. However~\eqref{eq:explicit_upper_I_R} is not a very good bound since at infinity $g(a)\sim \kappa a^3$ with $\kappa<1$ by~\eqref{eq:G_infinity}.

\begin{proof}
Let $f$ be any function of $\cF_{\mu\nu,R}$ and define the extension
$$\tilde f=f\1_{[0,R]}+g\1_{(R,\ii)}.$$
We claim that
$$\Psi_{\mu\nu}(a,b)\leq \tilde f(a)+\tilde f(b),\qquad \forall a,b\geq0.$$
Since $f$ satisfies this property on $[0,R]$ already and $g\in\cF_{\mu\nu}$, we only have to look at the case where, say, $a\leq R<b$. By definition of $g$ we then have
$$\Psi_{\mu\nu}(a,b)\leq \frac{\Psi_{\mu\nu}(a,a)_+}2+g(b)\leq f(a)+g(b)=\tilde f(a)+\tilde f(b).$$
Using $\tilde f$ as a trial function for $I(\mu,\nu)$ we get
\begin{equation}
 I_R(\mu,\nu)\leq I(\mu,\nu)\leq \int_{B_R}\frac{f(|z|)}{|z|^7}\,\rd z+\int_{\R^3\setminus B_R}\frac{g(|z|)}{|z|^7}\rd z.
  \label{eq:estimate_I_R_f}
\end{equation}
To be more precise, $\tilde f$ is not necessarily a continuous function but it has at most one jump, at $R$. It can thus be approximated from above by a sequence $\tilde f_n$ of continuous functions. Those belong to $\cF_{\mu\nu}$ and the result follows after passing to the limit $n\to\ii$. Optimizing over $f\in\cF_{\mu\nu,R}$, we obtain the claim.
\end{proof}

The previous result justifies the use of the truncated problem $I_R(\mu,\nu)$ in place of the original problem $I(\mu,\nu)$. The following provides the existence of an optimizer for the truncated problem.

\begin{theorem}[Existence of an optimizer for $I_R(\mu,\nu)$]\label{thm:existence_f}
Let $R\geq2$. Let $\mu$ and $\nu$ be two a radial probability measures supported in the unit ball with $\mu(\{0\})=0$. There exists an optimal $f\in C^0([0,R],\R_+)$ solving the minimization problem $I_R(\mu,\nu)$ in~\eqref{eq:def_cI_mu_R}. This $f$ can be chosen Lipschitz-continuous and to satisfy the nonlinear equation
\begin{equation}
 f(a)=\max_{b\in[0,R]}\big\{\Psi_{\mu\nu}(a,b)-f(b)\big\},\qquad\forall a\in[0,R].
 \label{eq:nonlinear_f}
\end{equation}
In particular, we deduce that
$$0\leq f\leq g \qquad\text{on $[2,R]$,}$$
where $g$ is the function in Lemma~\ref{lem:G}.
\end{theorem}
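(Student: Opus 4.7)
The plan is to apply the direct method of the calculus of variations, employing a Perron-type operator that simultaneously enforces feasibility, endows candidates with Lipschitz regularity, and produces a monotone decrease of the cost.

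Introduce the operator
\[
T[f](a) := \max_{b \in [0,R]}\bigl(\Psi_{\mu\nu}(a,b) - f(b)\bigr)_+
\]
on continuous functions $f : [0,R] \to \R$. Lemma~\ref{lem:regularity_Psi} directly yields three useful facts: $T[f](0) = 0$ because $\Psi_{\mu\nu}$ vanishes on the axes, $T[f]$ is Lipschitz with constant $L := \sup_{[0,R]^2}|\nabla \Psi_{\mu\nu}|$ (an envelope bound, independent of $f$), and $f \in \cF_{\mu\nu,R}$ is equivalent to $f \geq T[f]$. A brief rearrangement---plugging the witness $c = a$ into the maximum defining $T[f](b)$ and using the symmetry $\Psi_{\mu\nu}(a,b) = \Psi_{\mu\nu}(b,a)$ inherited from the radiality of $\mu,\nu$---shows that whenever $f \in \cF_{\mu\nu,R}$ one has both $T^2[f] \leq f$ pointwise and $T^2[f] \in \cF_{\mu\nu,R}$.

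Let $\{f_n\}$ be a minimizing sequence for $I_R(\mu,\nu)$. Replace each $f_n$ by $\tilde f_n := T^2[f_n]$, which still minimizes, is feasible, $L$-Lipschitz, and vanishes at $0$, and hence is uniformly bounded by $LR$ on $[0,R]$. Arzel\`a--Ascoli extracts a uniform limit $f^\star \in C([0,R],\R_+)$, still $L$-Lipschitz; the constraint passes to pointwise limits, and Fatou's lemma gives $\int_{B_R} f^\star(|z|)/|z|^7 \,\dx \leq I_R(\mu,\nu)$, so $f^\star$ is an optimizer.

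For the Euler--Lagrange equation~\eqref{eq:nonlinear_f}, minimality combined with $T^2[f^\star] \leq f^\star$ (and $T^2[f^\star]$ being feasible) forces $T^2[f^\star] = f^\star$ pointwise. This identity unpacks to $f^\star(a) + T[f^\star](b) \geq \Psi_{\mu\nu}(a,b)$, and symmetry of $\Psi_{\mu\nu}$ yields also $T[f^\star](a) + f^\star(b) \geq \Psi_{\mu\nu}(a,b)$, so the average $\bar f := \tfrac12(f^\star + T[f^\star])$ lies in $\cF_{\mu\nu,R}$. Since $\bar f \leq f^\star$, minimality forces $\int_{B_R} T[f^\star](|z|)/|z|^7\,\dx = I_R(\mu,\nu)$, and the pointwise ordering $T[f^\star] \leq f^\star$ between two continuous functions with equal integrals against the positive measure $|z|^{-7}\dx$ gives $T[f^\star] = f^\star$ everywhere. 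The bound $f^\star \leq g$ on $[2,R]$ then drops out of~\eqref{eq:nonlinear_f}: Newton's theorem (see the discussion after~\eqref{eq:def_Psi_mu}) localizes $\Psi_{\mu\nu}(a,\cdot)$ to $[0,a/(a-1)] \subseteq [0,2] \subseteq [0,a]$ when $a \geq 2$, and substituting the feasibility lower bound $f^\star(b) \geq \Psi_{\mu\nu}(b,b)_+/2$ (from $a=b$ in the constraint) upgrades $T[f^\star](a)$ into $g(a)$ as defined in~\eqref{eq:def_G}.

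The main obstacle I foresee is the passage from the weaker fixed-point identity $T^2[f^\star] = f^\star$ to the sharper identity $T[f^\star] = f^\star$: these are not equivalent in general, and the averaging argument depends essentially on the symmetry of $\Psi_{\mu\nu}$ in its two arguments, without which $\bar f$ could fail feasibility. The initial Lipschitz regularization via $T^2$ is equally important, since bare membership in $\cF_{\mu\nu,R}$ carries no modulus of continuity, so Arzel\`a--Ascoli would be unavailable on the raw sequence.
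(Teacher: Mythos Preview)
Your proof is correct and follows essentially the same $c$-transform strategy as the paper: regularize a minimizing sequence by applying the $\Psi_{\mu\nu}$-transform (your $T$), exploit the resulting uniform Lipschitz bound to invoke Arzel\`a--Ascoli, and then derive the fixed-point equation from minimality together with the symmetry of $\Psi_{\mu\nu}$. Your presentation is in fact slightly leaner than the paper's, which first establishes polynomial a~priori bounds $\kappa R^{-3}a^6\leq f\leq R^3\kappa^{-1}a^6$ (Step~1) before regularizing via the average $(T[f]+T^2[f])/2$; you bypass this by going straight to $T^2[f]$ and reading off the uniform bound from $T^2[f](0)=0$ and the Lipschitz constant, which is sufficient.
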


We emphasize that the last bound holds only on $[2,R]$. We expect that $f$ is bounded independently of $R$ on $[0,2]$ (not necessarily by $g$), see Remark~\ref{rmk:equivalent_infinity} below.

\begin{proof}
The problems $I(\mu,\nu)$ and $I_R(\mu,\nu)$ take the same form as the dual problem in the theory of optimal transport~\cite{Villani-09,Santambrogio-15}. Our proof uses this analogy and we divide it into several steps. We assume throughout that $\nu\neq\delta_0$. The proof when $\nu=\delta_0$ works similarly, using~\eqref{eq:formula_Psi_delta}.

\medskip

\noindent\textbf{Step 1: \textit{a priori} bounds on $f$.} We can restrict the minimization in~\eqref{eq:def_cI_mu_R} to all the $f\in \cF_{\mu\nu,R}$ so that
\begin{equation}
\int_{B_R}\frac{f(|z|)}{|z|^7}\rd z\leq I_R(\mu,\nu)+1\leq I(\mu,\nu)+1\leq 2^{\frac43}12\pi+1\leq 100,
\label{eq:restrict_f}
\end{equation}
where the bound on $I(\mu,\nu)$ is by Lemma~\ref{lem:simple_estim_I_mu}. Taking first $a=b$ we obtain using~\eqref{eq:lower_bd_Psi}
$$f(a)\geq \frac{\Psi_{\mu\nu}(a,a)}{2}\geq a^6\frac{1-aV_\mu(ae_1)-aV_\nu(ae_1)}{2}.$$
Since $aV_\mu(ae_1)\to0$ and $aV_\nu(ae_1)\to\nu(\{0\})<1$ when $a\to0$ by~\eqref{eq:limit_V_mu}, we deduce that
$$f(a)\geq \frac{\Psi_{\mu\nu}(a,a)}{2}\geq a^6\frac{1-\nu(\{0\})}{4}$$
in a neighborhood $[0,\eps_0]$ of the origin, where $\eps_0<1$ depends only on $\mu$ and $\nu$.
On the other hand, averaging the constraint on $f$ over $b$ in a small interval $(0,\eps)$. We obtain
\begin{align*}
f(a)&\geq \frac1{\eps}\int_0^\eps\Psi_{\mu\nu}(a,b)\rd b-\frac1{\eps}\int_0^\eps f(b)\,\rd b\\
&\geq \frac1{\eps}\int_0^\eps\Psi_{\mu\nu}(a,b)\rd b-100\,\eps^6\\
&\geq a^3 \frac1{\eps}\int_0^\eps b^3\big(1-bV_\mu(be_1)-bV_\nu(be_1)\big)\rd b-100\,\eps^6
\end{align*}
for all $a\in[0,R]$. For $\eps\leq\eps_0$, we thus have
$$f(a)\geq a^3 \frac{\eps^3(1-\nu(\{0\})}{16}-100\,\eps^6.$$
Hence we have $f\geq a^3 \frac{\eps^3}{32}$ for $a\in[\eps_0,R]$ if we pick
$100\,\eps^3 = {\eps_0^3}/{32}.$
This proves that $f(a)\geq \kappa \max(a^6,a^3)$ for a small enough constant $\kappa\leq 1$ depending only on $\mu$ and $\nu$. So far the argument works the same if $R=+\ii$. On the compact interval $[0,R]$ we can simplify this to $f(a)\geq \kappa R^{-3}a^6$.

Consider next the new function $\tilde{f}(a)=\min(f(a),Ca^6)$ with $C=R^3/\kappa\geq8$. We claim that $\tilde f\in\cF_{\mu\nu,R}$.
Since
$$\Psi_{\mu\nu}(a,b)\leq 2a^3b^3\leq a^6+b^6\leq C(a^6+b^6),$$
we only have to consider the case where, say, $\tilde f(a)=f(a)$ and $\tilde f(b)=Cb^6$. In this case we write
$$\Psi_{\mu\nu}(a,b)\leq 2a^3b^3\leq \kappa R^{-3} a^6+\frac{R^3b^6}{\kappa}\leq f(a)+Cb^6=\tilde f(a)+\tilde f(b).$$
Thus the new function $\tilde f$ belongs to $\cF_{\mu\nu,R}$ as claimed. Since $\tilde f\leq f$ it also satisfies the constraint~\eqref{eq:restrict_f} and even gives a smaller integral. Thus in our minimization $I_R(\mu,\nu)$ we can always replace $f$ by $\tilde f$. In other words, we can always work with functions $f\in\cF_{\mu\nu,R}$ satisfying the additional condition
\begin{equation}
 \frac{\kappa}{R^3} a^6\leq f(a)\leq \frac{R^{3}}{\kappa}a^6
 \label{eq:info_f_1}
\end{equation}
where we recall that $\kappa$ only depends on $\mu$ and $\nu$.

\medskip

\noindent\textbf{Step 2: iterating the fixed point twice.} Let now $f\in\cF_{\mu\nu,R}$ satisfy the additional constraints~\eqref{eq:restrict_f} and~\eqref{eq:info_f_1}.   We define the two functions
$$f_1(a):=\max_{b\in[0,R]}\left\{\Psi_{\mu\nu}(a,b)-f(b)\right\},\qquad f_2(b):=\max_{a\in[0,R]}\left\{\Psi_{\mu\nu}(a,b)-g(a)\right\}.$$
Those would be denoted as $f^{\Psi_{\mu\nu}}$ and $f^{\Psi_{\mu\nu}\Psi_{\mu\nu}}$ ($\Psi_{\mu\nu}$-transforms) in the theory of optimal transport. Taking $b=0$ we find that $f_1\geq0$ since $\Psi_{\mu\nu}(a,0)=0$ and $f(0)=0$. Using that
$$\Psi_{\mu\nu}(a,b)\leq f(a)+f(b),\qquad \forall 0\leq a,b\leq R,$$
we get $f_1\leq f$. Now, let $b_a\in[0,R]$ be so that $f_1(a)=\Psi_{\mu\nu}(a,b_a)-f(b_a)$. We have
\begin{align*}
f_1(a)-f_1(a')&\leq \Psi_{\mu\nu}(a,b_a)-f(b_a)-\Psi_{\mu\nu}(a',b_a)+f(b_a)\\
&=\Psi_{\mu\nu}(a,b_a)-\Psi_{\mu\nu}(a',b_a) \leq CR^5|a-a'|
\end{align*}
by Lemma~\ref{lem:regularity_Psi}. Exchanging the role of $a$ and $a'$ proves that $f_1$ is Lipschitz-continuous on $[0,R]$:
\begin{equation}
 |f_1(a)-f_1(a')|\leq CR^5|a-a'|.
 \label{eq:g_Lipschitz}
\end{equation}
Since $0\leq f_1\leq f$, we conclude that $f_1(0)=0$ and can thus carry over the exact same argument for $f_2$. We obtain $0\leq f_2\leq f$ and that $f_2$ satisfies the same Lipschitz estimate~\eqref{eq:g_Lipschitz} as $f_1$.
By definition of $f_2$ we have
$$\Psi_ {\mu\nu}(a,b)\leq f_1(a)+f_2(b),\qquad\forall a,b\in[0,R].$$
Exchanging the roles of $a$ and $b$ we deduce that
$$\tilde f:=\frac{f_1+f_2}{2}$$
belongs to $\cF_{\mu\nu,R}$, satisfies the pointwise bound $0\leq \tilde f\leq f$ and is Lipschitz as in~\eqref{eq:g_Lipschitz}. Thus we can replace $f$ by $\tilde f$ and restrict our minimization problem $I_R(\mu,\nu)$ to the functions in $\cF_{\mu\nu,R}$ which satisfy~\eqref{eq:restrict_f},~\eqref{eq:info_f_1} and~\eqref{eq:g_Lipschitz}. This set is compact, by Ascoli's theorem. Since $f\mapsto\int_{B_R}|z|^{-7}f(|z|)\,\rd z$ is lower semi-continuous, we conclude that there exists an optimizer $f$ for $I_R(\mu,\nu)$, satisfying all the previous additional properties.

\medskip

\noindent\textbf{Step 3: properties of minimizers.}
For the previous minimizer $f$ we could go on and define the same functions $0\leq f_1,f_2\leq f$ as above. Those ought to have the same integral as $f$, by minimality. This proves that $f=f_1=f_2$, that is, $f$ solves the nonlinear equation~\eqref{eq:nonlinear_f}.

Finally, if $a\geq2$ then the maximum over $b$ in~\eqref{eq:nonlinear_f} must be attained for $b\leq2$ since $\Psi_{\mu\nu}(a,b)=0$ for $a,b\geq2$. Thus $b\leq a$ and
$$f(a)=\max_{b\leq a}\big\{\Psi_{\mu\nu}(a,b)-f(b)\big\}\leq g(a)$$
since $f(b)\geq\Psi_{\mu\nu}(b,b)_+/2$ and by definition of $g$ in Lemma~\ref{lem:G}.
\end{proof}

\begin{remark}\label{rmk:equivalent_infinity}\rm
From numerical simulations we expect that minimizers $f_R$ for $I_R(\mu,\nu)$ are in fact bounded independently of $R$ on $[0,2]$ (not necessarily by $g$). Should this be true, we could pass to the limit and get a minimizer for $I(\mu,\nu)$. Note that this minimizer would behave like $g$ at infinity in~\eqref{eq:G_infinity} since the same argument as above gives
\begin{equation}
g(a)-\max_{b\in [0,2]} \left\{f(b)-\frac{\Psi_{\mu\nu}(b,b)_+}{2}\right\}\leq f(a)\leq g(a),\qquad \forall a\geq2.
\label{eq:lower_bound_f_G}
\end{equation}
Introducing $I_R(\mu,\nu)$ is useful to get existence, but also helpful for the numerical implementation discussed in the next section.
\hfill$\diamond$
\end{remark}

\section{Numerical evaluation of the Lieb-Oxford constant} \label{sec:numerics}

In this section we explain how we have discretized and then approximately solved the minimization problem
\begin{equation}
 \inf_{\mu,\nu} I(\mu,\nu)D(\mu,\mu)^2
 \label{eq:remind_min}
\end{equation}
appearing on the right side of~\eqref{eq:main}.

\subsubsection*{\textbf{Discretizing $\mu$ and $\nu$}}
We reduce the problem to finite dimension by assuming that
\begin{itemize}
 \item $\mu$ and $\nu$ have compact support which, by scaling, can be taken in the unit ball;
 \item $\mu$ and $\nu$ are sums of uniform measures over spheres of radii $j/K$, $j=0,...,K$
\end{itemize}
for some $K\geq1$. For $\mu$ we do not allow a delta at the origin, but for $\nu$ a delta is permitted and corresponds to $j=0$. It is important to note that the previous approximation always yields an \emph{upper bound} on the full minimum in~\eqref{eq:remind_min}, hence on the best constant $c_{\rm LO}$. Even small values of $K$ can yield some information.

With the above approximation, the minimization problem~\eqref{eq:remind_min} is posed in dimension $2K+1$. Although computing $D(\mu,\mu)$ is easy and explicit, the main difficulty is to compute $I(\mu,\nu)$.

\subsubsection*{\textbf{Approximate computation of $I(\mu,\nu)$}}
Let $\mu$ and $\nu$ be sums of uniform measures over spheres of radii $j/K$, $j=1,...,K$, together with a $\delta_0$ for $\nu$:
\begin{equation}
\mu=\frac1K\sum_{j=1}^K\mu_j\, \sigma_{\frac{j}K},\quad \nu=\nu_0\delta_0+\frac1K\sum_{j=1}^K\nu_j\, \sigma_{\frac{j}K},\quad \frac1K\sum_{j=1}^K\mu_j=\nu_0+\frac1K\sum_{j=1}^K\nu_j=1,
 \label{eq:mu_discretized}
\end{equation}
where $\sigma_r$ is the normalized delta measure on the sphere of radius $r$. Then $\Psi_{\mu\nu}$ can be expressed using~\eqref{eq:formula_Psi_delta} as
\begin{multline}
\Psi_{\mu\nu}(a,b)=\frac1{K^8}\sum_{j,k=1}^Kj^3k^3\Psi_{\sigma\sigma}\left(\frac{Ka}{j},\frac{Kb}{k}\right)(\mu_j\nu_k+\nu_j\mu_k-\nu_j\nu_k)+2\nu_0^2a^3b^3\\
-\frac{a^3b^3\nu_0}{K}\sum_{j=1}^K(\mu_j+\nu_j)\left(\min\left(\frac{aK}j,1\right)+\min\left(\frac{bK}j,1\right)\right).
\label{eq:Psi_mu_average_spheres2}
\end{multline}
In order to solve the minimization problem $I(\mu,\nu)$ approximately, we first choose an $R\geq2$ and use~\eqref{eq:estimate_I_R} to infer
$$I(\mu,\nu)\leq  I_R(\mu,\nu)+4\pi\int_R^\ii\frac{g(r)}{r^5}\rd r$$
where $g$ is the function from Lemma~\ref{lem:G}. In our simulations we have observed that $g(r)/r^3$ was always decreasing to its limit at infinity. The function started to decrease way before reaching the values of $R$ we took. Hence we always used the simple bound
\begin{equation}
 I(\mu,\nu)\leq  I_R(\mu,\nu)+4\pi \frac{g(R)}{R^4}.
 \label{eq:upper_I_mu_numerics}
\end{equation}
Although we have no proof that this bound is valid, we believe there is no approximation here.
In practice we took $10\leq R\leq 40$, which was sufficient to attain the desired precision. Of course, this all requires to compute an approximation of the function $g$, which we now discuss.

We discretized the minimization problem $I_R(\mu,\nu)$ in radial coordinates on a grid $(\frac{m}M)_{0\leq m\leq MR-1}$ containing $M$ points per unit length. First we computed the $(RM)\times (RM)$ matrix
\begin{equation}
 \psi_{\ell m}=\Psi_{\mu\nu}\left(\frac{\ell}{M},\frac{m}{M}\right),\qquad 0\leq \ell,m\leq RM-1
 \label{eq:matrix_Psi_exact}
\end{equation}
where $\Psi_{\mu\nu}$ is itself given by~\eqref{eq:Psi_mu_average_spheres2}. Computing this matrix scales like $R^2K^2M^2$ and is thus numerically very demanding. In practice, we do this in parallel on a cluster containing 40 GPUs. When we need to compute $\psi$ many times for different $\mu$ and $\nu$'s, we proceed differently and instead first construct and store the tensor
\begin{equation}
 \cT_{jk\ell m}=\left(\frac{jk}{\ell m}\right)^3\Psi_{\sigma\sigma}\left(\frac{K\ell}{M j},\frac{Km}{M k}\right),
 \label{eq:tensor}
\end{equation}
which is then used to compute $\psi$. Storing $\cT$ requires a lot of memory and was possible only for $K\sim M\simeq 50$ and $R\simeq 10$.

With the matrix $\psi$ at hand, we can replace $I_R(\mu,\nu)$ by its discretization over the grid
\begin{equation}
 I_{R,M}(\mu,\nu):=\inf_{F\in\cF_{\psi,R,M}}4\pi M^4\sum_{m=1}^{RN-1}\frac{F_m}{m^5}
 \label{eq:def_I_discretized}
\end{equation}
where
$$\cF_{\psi,R,M}=\bigg\{F=(F_m)_{m=0}^{RM-1}\ :\  \psi_{\ell m}\leq F_\ell+F_m,\quad \forall 0\leq \ell,m\leq RM-1,\ F_0=0\bigg\}.$$
Similarly, we introduce the discretization of the function $g$ from Lemma~\ref{lem:G}
\begin{equation}
G_m=\max_{0\leq \ell\leq m}\left\{\psi_{\ell m}-\frac{(\psi_{\ell\ell})_+}{2}\right\},\qquad \text{for $0\leq m\leq RM-1$.}
\label{eq:def_G_discretized}
\end{equation}
Our discretized approximation of $I(\mu,\nu)$ is, thus,
\begin{equation}
 I_{R,M}(\mu,\nu)+4\pi \frac{G_{RM-1}}{R^4}.
 \label{eq:approximate_I}
\end{equation}
We should mention here that this is \emph{a priori} not an upper bound to the true value $I(\mu,\nu)$. Thanks to Theorem~\ref{thm:existence_f} we know that minimizers are Lipschitz, which can be used to prove that
$$I_{R,M}(\mu,\nu)\geq I_{R}(\mu,\nu)-\frac{C_R}{M}.$$
If we use the estimate~\eqref{eq:g_Lipschitz} from the proof of Theorem~\ref{thm:existence_f} we get a constant $C_R$ behaving like $R^5$. In reality the obtained function $f$ always looks very smooth, independently of $R$, which suggests that $C_R$ does not depend on $R$.

It remains to explain how we solved the minimization problem $I_{R,M}(\mu,\nu)$. This is the minimization of a linear functional under inequality constraints (linear programming) which takes the same form as the discrete dual optimal transport problem~\cite{PeyCut-19}. Let us define the $\psi$-transform of \emph{any} non-negative vector $F$ with $F_0=0$ by
$$F^\psi_m:=\max_{0\leq \ell\leq RM-1}\left\{\psi_{\ell m}-F_\ell\right\}.$$
Then we have by definition $0\leq F^\psi\leq F$ (componentwise) as well as
$$\psi_{\ell m}\leq F_\ell+F^\psi_m,\qquad \forall 0\leq \ell,m\leq RM-1.$$
By symmetry we conclude that $(F+F^\psi)/2\in\cF_{\psi,R,M}$. From this we deduce that we can remove the constraint at the expense of adding a $\psi$-transform:
\begin{equation}
 I_{R,M}(\mu,\nu)=\inf_{F\geq0}4\pi M^4\sum_{m=1}^{RN-1}\frac{F_m+F_m^\psi}{2m^5}.
 \label{eq:def_I_discretized_convex}
\end{equation}
This way we obtain a convex nonlinear optimization problem in $F$.

If we give ourselves a vector $F^{(0)}$ in $\cF_{\mu,R,N}$, we have a simple way of decreasing the sum on the right side of~\eqref{eq:def_I_discretized_convex}. We just define inductively
\begin{equation}
F^{(n+1)}=\frac{F^{(n)}+(F^{(n)})^\psi}{2},
 \label{eq:iteration_f}
\end{equation}
which also belongs to $\cF_{\psi,R,M}$. By construction, we obtain a decreasing sequence $F^{(n+1)}\leq F^{(n)}$ and thus the sum decreases and converges. We can stop whenever the relative error is less than a prescribed $\varepsilon$ (taken equal to $\varepsilon=10^{-6}$ in our case). If we start with either $F^{(0)}_m=\max\big(2(m/M)^6,8(m/M)^3\big)$ by Lemma~\ref{lem:simple_estim_I_mu}, or $F^{(0)}=G$ (the discretization of the function $g$ from Lemma~\ref{lem:G}), the iterative algorithm stops after a few iterations for $100\leq M\leq 1000$.

This method allows us to easily find an extreme point $F=\lim_{n\to\ii}F^{(n)}$ of the cone $\cF_{\psi,R,M}$, which has a sum in~\eqref{eq:def_I_discretized} lower than our initial vector $F^{(0)}$. In principle, this point has no reason of being a global minimizer for $I_{R,M}(\mu,\nu)$. Nevertheless, when we tested this method in our situation using standard algorithms on~\eqref{eq:def_I_discretized_convex}, we could never beat the extreme point obtained by setting $F^{(0)}=G$.  In order to spare computational time, we thus used this $F$ as an (upper) approximation to $I_{R,M}(\mu,\nu)$ everywhere. The limiting $F$ was found to be always very close to the vector $G$, and sometimes even exactly equal (in which case the algorithm~\eqref{eq:iteration_f} stops at the first step).

\subsubsection*{\textbf{Examples: balls and spheres}}
We tested the computation of $I(\mu,\nu)$ for $\mu$ and $\nu$ equal to the uniform measure of either the sphere or of the ball. We can also allow $\nu$ to be a delta at the origin to compare with the Lieb-Oxford original approach~\cite{LieOxf-80}. In all these cases we can compute the function $\Psi_{\mu\mu}$ exactly. For instance, for $\mu=\nu=\cB:=\frac{3}{4\pi}\1_{B_1}$, a tedious calculation provides
\begin{multline}
 \Psi_{\cB,\cB}(a,b)=\frac{(a+b-ab)_+^4}{160}\left(a^2b^2-5a^2-5b^2+4ab^2+4ba^2+20ab\right)\\
 -\frac{(|a-b|-ab)_+^4}{160}\left(a^2b^2-5a^2-5b^2+4|a-b|ab-20ab\right).
 \label{eq:Psi_ball}
\end{multline}
In principle, there are explicit formulas for $\mu$ any polynomial in $r$. In Table~\ref{tab:balls_spheres} we report the values we found for the LO constant when $\mu$ and $\nu$ are either $\cB$ or $\sigma$. The best constant is obtained for $\mu=\nu=\cB$ and it is already much better (slightly above $1.60$) than all previously known results. Note the slight improvement $1.6583$ in the Lieb-Oxford case $\mu=\cB$ and $\nu=\delta_0$, compared to the original value $1.68$ obtained in~\cite{LieOxf-80}. This is due to the optimization of $I(\mu,\delta_0)$ instead of using the bound~\eqref{eq:LO_bound}.

\begin{table}[t]
\begin{tabular}{r|l|l|l}
 & $\nu=\delta_0$ & $\nu=\sigma$ &$\nu=\cB$\\
\hline
$\mu=\sigma$ & 1.7829 &1.7019 &1.7172\\
$\mu=\cB$ & 1.6583&1.6444 &1.6044\\
\end{tabular}

\medskip

\caption{\footnotesize Value of the LO constant found for $\mu$ and $\nu$ being either $\cB=(3/4\pi)\1_{B_1}$ (uniform measure of the unit ball), or $\sigma$ (uniform measure of the unit sphere), or the Dirac delta $\delta_0$. We use here the exact formula for $\Psi_{\mu\nu}$ and discretize $I(\mu,\nu)$ using $M=500$ and $R=30$. Numbers are all rounded up to the fourth decimal.\label{tab:balls_spheres}}
\end{table}

In Figure~\ref{fig:example_f} we display the optimal vector $F=\lim_{n\to\ii}F^{(n)}$ which we found by applying the algorithm described before for $F^{(0)}=G$ and $\mu=\nu$, with $R=20$ and $M=1000$. In fact in the picture we rather draw $F(r)/r^3$, which is almost constant for large $r$. The solution $F$ was found to coincide with the vector $G$ in these two cases.

\begin{figure}[t]
\centering

\begin{tabular}{cc}
\includegraphics[width=6.4cm]{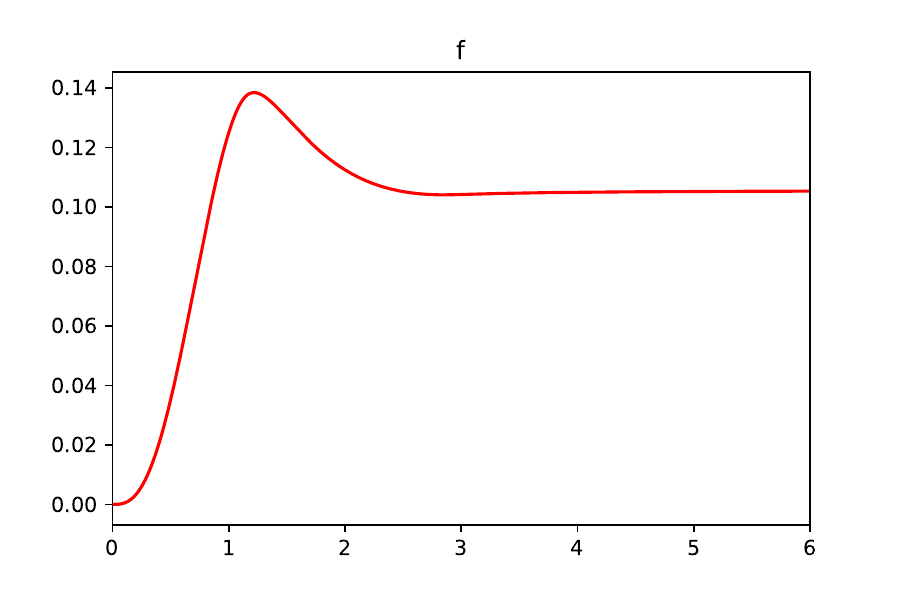}&\includegraphics[width=6.4cm]{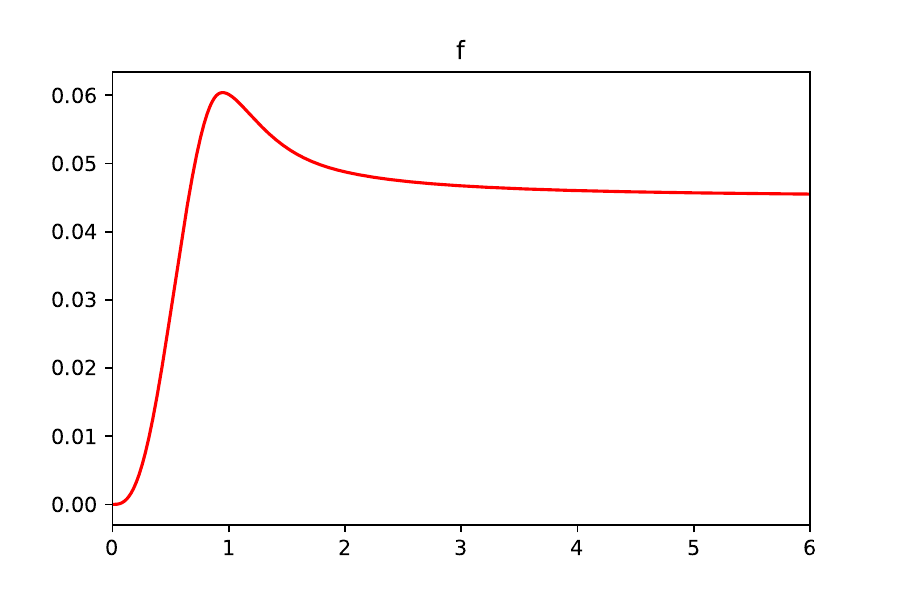} \\
\small($\mu=\nu=\sigma$)&\small ($\mu=\nu=\cB$)
\end{tabular}

\caption{\footnotesize Plot of the numerical approximations of the function $r\mapsto r^{-3}f(r)$, for $\mu=\nu=\sigma$ (delta measure of the sphere) on the left and $\mu=\nu=\cB=(3/4\pi)\1_{B_1}$ (uniform measure of the unit ball) on the right. In these two cases, the solution $f$ seems to coincide with the function $g$ from Lemma~\ref{lem:G}. We used here the exact formula for $\Psi_{\mu\nu}$, which we discretized into the matrix $\psi$ by~\eqref{eq:matrix_Psi_exact} on a grid with $M=1000$ points per unit length and $R=20$ to compute the approximation $F$ to the function $f$. The shapes in the two cases are very similar but the ball gives a much lower function, hence a much better approximation $c_{\rm LO}\leq 1.6044$.\label{fig:example_f}}
\end{figure}

The exact knowledge of $\Psi_{\mu\nu}$ allows us to investigate separately the efficiency of the discretization of the measures using concentric spheres with the parameter $K$, and that of the computation of $I(\mu,\nu)$ with the parameters $M$ and $R$.

\medskip

\paragraph*{\textit{Convergence in $M$ and $R$}}
In Table~\ref{tab:M_R_ball} we report the LO constant obtained for $\mu=\nu=\cB$, using the exact formula of $\Psi_{\cB\cB}$, in terms of the two parameters $M$ and $R$. Essentially, the numbers are decreasing with $R$, confirming the validity of our upper approximation~\eqref{eq:approximate_I}, and increasing in $M$. For the displayed values of $M$ we already obtain a reliable approximation to the order $10^{-4}$.

\begin{table}[t]
\begin{tabular}{r|l|l|l|l}
 & $R=10$ & $R=20$ & $R=30$ & $R=40$\\
\hline
$M=100$ & 1.604358& 1.604317&1.604312 & 1.604311\\
$M=200$ & 1.604373& 1.604334&1.604330 &1.604329\\
$M=300$ & 1.604375&1.604337& 1.604334&1.604333\\
 $M=500$ &1.604377&1.604340& 1.604336& \\
 $M=1000$ &1.604377 &1.604340&  &\\
\end{tabular}

\medskip

\caption{\footnotesize Value of the LO constant found for $\mu=\nu=\cB$ (uniform measure of the unit ball), depending on the discretization parameters $M$ and $R$ for the approximation of $I(\mu,\nu)$. Here the exact formula~\eqref{eq:Psi_ball} of $\Psi_{\cB\cB}$ is used in the computation.\label{tab:M_R_ball}}
\end{table}

\medskip

\paragraph*{\textit{Convergence in $K$.}}
In order to test the accuracy of the discretization~\eqref{eq:mu_discretized}, we report in Table~\ref{tab:value_CLO_N} the values we obtain with the exact same method, if we instead discretize the uniform ball as a combination of concentric spheres. We conclude that if we are only interested in the first few digits we can safely work with $K\leq 100$.

\begin{table}[t]
\begin{tabular}{l|l|l|l|l|l}
$K$&10&20&50&100&$\ii$\\
\hline
&1.606748&1.604961&1.604440&1.604364&1.604337\\
\end{tabular}

\medskip

\caption{\footnotesize Value of the Lieb-Oxford constant found for different values of the number $K$ of concentric spheres used to represent the uniform ball $\cB$ for $\mu=\nu$. Here we used $R=20$ and $M=300$. The value found for these parameters using the exact formula~\eqref{eq:Psi_ball} of $\Psi_{\cB\cB}$ is indicated in the column $K=\ii$.
\label{tab:value_CLO_N}}
\end{table}

\subsubsection*{\textbf{Optimizing over $\mu$ and $\nu$}}
In order to push the constant further down, we have to optimize over $\mu$ and $\nu$.
The algorithm requires many evaluations of the function $I_{R,M}(\mu,\nu)$ as well as to re-compute the matrix $\psi$ for each new measures $\mu,\nu$. This takes a lot of time. Storing the tensor~\eqref{eq:tensor} in memory seems the best option, but this limited the value of the number $K$ of spheres, the number $M$ of grid points and the cut-off $R$. We thus only used the minimization algorithm for $K=50$, $M=100$ and $R=10$. We used the standard Broyden-Fletcher-Goldfarb-Shanno (BFGS) algorithm to optimize the coefficients $\mu_j,\nu_j$ in~\eqref{eq:mu_discretized}. We employed the parallel implementation of BFGS from the Python package \emph{optimParallel}~\cite{optimParallel,optimParallel-R}, on a cluster of 40 GPUs.

The best solution we obtained after testing many random initial states is displayed in Figure~\ref{fig:optimal_mu}. It has $c_{\rm LO}\leq1.5765$ for $K=50$, $M=100$ and $R=10$. It is surprising that $\mu$ ends up being smooth, with a support slightly smaller than the unit ball, whereas $\nu$ is less smooth and close (but not exactly equal) to three deltas on spheres, including one on the unit sphere.

This solution $(\mu,\nu)$ with $50$ concentric spheres can now be used to provide an upper bound on $c_{\rm LO}$, by computing a sufficiently good approximation of $I(\mu,\nu)$, that is, varying only $M$ and $R$. For that $\psi$ we found $G^\psi\neq G$ and it was thus necessary to apply the iterative algorithm~\eqref{eq:iteration_f} to improve the upper bound. We display the result for several values of $M$ and $R$ in Table~\ref{tab:M_R}. Given these results we can safely assert that
\begin{equation}
\boxed{c_{\rm LO}\leq 1.5765.}
\label{eq:LO_final}
\end{equation}

\begin{figure}[t]
\centering
\includegraphics[width=6.5cm]{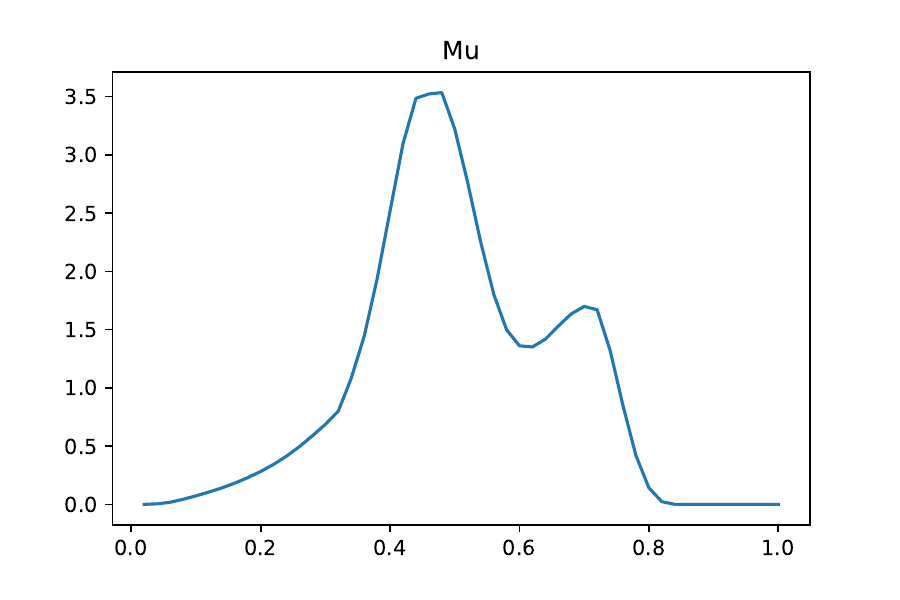}\includegraphics[width=6.5cm]{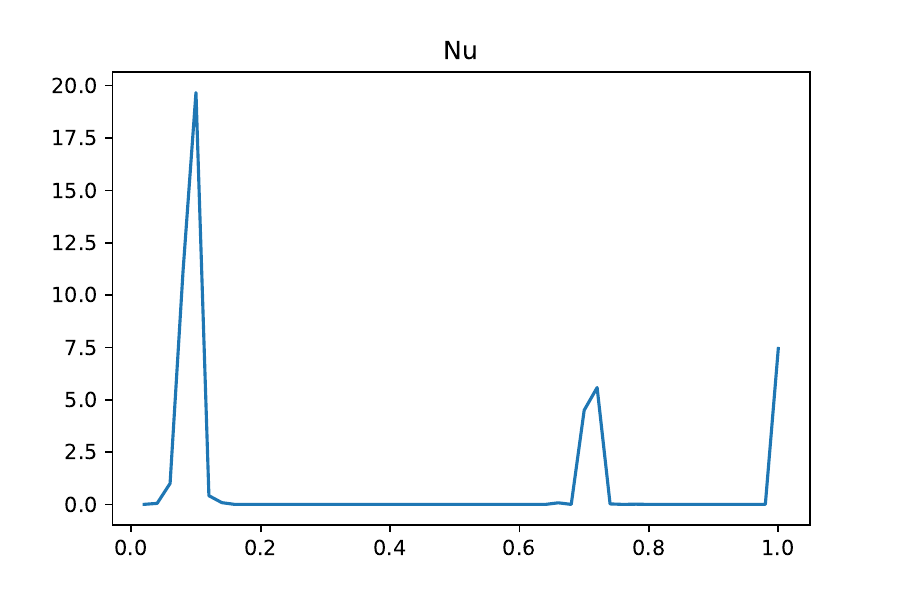}

\caption{\footnotesize Plot of the two radial measures $r\mapsto r^2\mu(r)$ (left) and $r\mapsto r^2\nu(r)$ (right) found by the BFGS algorithm for $K=50$, $M=100$, $R=10$. We obtain the upper bound $c_{\rm LO}\leq 1.5765$ claimed in~\eqref{eq:LO_final}. \label{fig:optimal_mu}}
\end{figure}

\begin{table}[t]
\begin{tabular}{r|l|l|l|l}
 & $R=10$ & $R=20$ & $R=30$ & $R=40$\\
\hline
$M=100$ & 1.576395&1.576364 &1.576360 & 1.576359\\
$M=200$ & 1.576441&1.576410 & 1.576406 &1.576405\\
$M=300$ & 1.576446&1.576417 & 1.576413 &\\
$M=400$ & 1.576446 &1.576419 &  &\\
\end{tabular}

\medskip

\caption{\footnotesize Value of the LO constant found for the two measures $\mu$ and $\nu$ in Figure~\ref{fig:optimal_mu} (which have $K=50$ concentric spheres), depending on the discretization parameters $M$ and $R$ for the approximation of $I(\mu,\nu)$.\label{tab:M_R}}
\end{table}


\newcommand{\etalchar}[1]{$^{#1}$}

\end{document}